\documentclass[twocolumn,twoside]{IEEEtran}
\usepackage{times} 
\usepackage{epsfig}
\usepackage{graphicx}
\usepackage{epstopdf}
\usepackage{color,subfigure, latexsym, amsmath, amsfonts, amssymb,cite}
\usepackage{algorithm}
\usepackage{algorithmic}
\usepackage{epsfig}
\usepackage{graphicx}
\usepackage{epstopdf}
\usepackage{enumitem}
\usepackage{bm,amsmath,amssymb,amsfonts,graphicx,epsfig,amsthm,color, xcolor}
\usepackage{bbold,dsfont}
\usepackage{float}
\usepackage[switch]{lineno}
\usepackage[hyphens]{url}
\PassOptionsToPackage{bookmarks=false}{hyperref}

\usepackage[depth=-1]{bookmark}

\usepackage{booktabs}       
\usepackage{amsfonts}       
\usepackage{microtype}      

\usepackage{times}
\usepackage{graphicx} 

\usepackage{wrapfig}



\usepackage{accents}
\newcommand{\R}{\mathbb{R}}

\newcommand{\X}{\mathbb{X}}

\def\be{\begin{equation}}
\def\ee{\end{equation}}
\def\ba{\begin{array}}
\def\ea{\end{array}}
\newcommand{\smat}[1]{\left[\begin{smallmatrix}#1\end{smallmatrix}\right]}
\newcommand{\bmat}[1]{\begin{bmatrix}#1\end{bmatrix}}

\newtheorem{lemma}{Lemma}

\newtheorem{theorem}{Theorem}
\newtheorem{definition}{Definition}
\newtheorem{assumption}{Assumption}

\theoremstyle{remark}\newtheorem{remark}{Remark}

\allowdisplaybreaks

\title{\LARGE \bf
 Data-driven Kernel-based Predictive Control with Stability\\ and Robustness Guarantees
}
\author{Wenjie~Liu,~Yifei~Li, Gang Wang,~\IEEEmembership{Senior Member,~IEEE}, and Lihua Xie,~\IEEEmembership{Fellow,~IEEE}
		\thanks{
        This work was supported by the Ministry of Education, Singapore, under AcRF Tier 1 Grant RG64/23, and in part by the National Natural Science Foundation of China under Grant U23B2059.
        }
		\thanks{
Wenjie Liu, Yifei Li, and Lihua Xie are with the Centre for Advanced Robotics Technology  Innovation (CARTIN), School of Electrical and Electronic Engineering, Nanyang Technological University, Singapore (e-mail:
 (wenjie.liu, li.yifei, elhxie)@ntu.edu.sg).
Gang Wang is with the State Key Lab of Autonomous Intelligent Unmanned Systems and the School of Automation, Beijing
Institute of Technology, Beijing 100081, China (email: gangwang@bit.edu.cn).
			}
	}

\begin{document}
	
	\maketitle

 \allowdisplaybreaks

 


\begin{abstract}                
In this paper, we provide a theoretical analysis of the closed-loop properties of a data-driven kernel-based predictive control (DDKPC) scheme developed solely from input-output data. 
The proposed formulation integrates a robust data-driven predictive control framework with a multi-step predictor for nonlinear systems constructed via kernel-based methods. 
This predictor implicitly captures the system's nonlinear behavior using the representer theorem. 
For the nominal case with noise-free data, we prove that the DDKPC scheme guarantees recursive feasibility and closed-loop stability, provided that the prediction horizon is sufficiently long and the kernel representation error is sufficiently small. 
To facilitate real-time implementation, we introduce a penalty relaxation formulation to alleviate the computational burden inherently caused by nonconvex implicit constraints.
Furthermore, the framework is robustified against measurement noise by aggregating the representation mismatch and the bounded noise into a unified uncertainty bound. 
Finally, we extend the DDKPC framework to slowly time-varying
nonlinear systems by periodically reconstructing the kernel predictor
from a fixed-budget online dictionary managed by the approximate
linear dependency (ALD) criterion. 
Under suitable conditions on the rate of variation of the input-output
evolution and the online prediction error, recursive feasibility and
practical closed-loop stability are preserved.
The effectiveness of the proposed approach is illustrated through numerical examples.

\end{abstract}

\begin{keywords}
Data-driven control, nonlinear control, kernel-based representation, predictive control
\end{keywords}


\section{Introduction}
Model predictive control (MPC) is widely used because it systematically computes control actions through online optimization while handling nonlinear dynamics, constraints, and closed-loop performance requirements \cite{rawlings2019model,zhou2023ral}. However, its effectiveness relies on an accurate predictive model, whose construction can be costly or infeasible for systems with high dimensionality, complex couplings, and significant unmodeled effects. This modeling burden remains a key obstacle to broader MPC deployment.


The increasing volume of data available from modern systems, combined with advances in computation, has motivated a shift toward data-driven predictive control frameworks that reduce or eliminate explicit model identification.
For linear time-invariant (LTI) systems, \textit{Willems et al's} fundamental lemma \cite{willems2005note} provides a direct data-based parametrization of admissible trajectories, which leads to recent advancement in data-driven control \cite{persis2020data,zhao2025ddlqr,wang2026koopman,yuan2026dd,wei2025dd,eising2024data}.
In this category, data-driven predictive control (DDPC) frameworks have shown that stabilizing receding-horizon controllers can be synthesized purely from input-output data, even in settings involving disturbances and measurement noise \cite{Coulson2019data,berberich2019data,hu2025robustddmpc,shinohara2026data}, and network-induced issues \cite{Liu2023data,Liu2023self,wang2025ddreview}.
However, these results lie in the category of LTI systems, while many emerging engineering systems operate in regimes where nonlinear effects are intrinsic and cannot be neglected.

Recent research has therefore focused on extending DDPC framework to nonlinear settings.
Efforts along this direction include data-driven MPC based on local linearization \cite{berberich2021linearII}, feedback linearization \cite{alsalti2023fb-mpc}, and nonlinear lifting using neural networks \cite{lazar2024neural-ddmpc}, kernels \cite{huang2023robust,azarbahram2024data}, or Koopman operators \cite{xiong2025data}.
While these methods demonstrate promising empirical performance and broaden applicability beyond LTI systems, existing kernel- and neural-network-based approaches generally lack guarantees such as recursive feasibility or closed-loop stability.
This gap arises because terminal sets and terminal costs are typically not incorporated, and lower bounds on the prediction horizon ensuring stability remain unknown.

To bridge these theoretical and practical gaps, this paper develops a DDKPC framework for unknown nonlinear systems equipped with formal closed-loop guarantees. 
At its core, the proposed approach leverages the representer theorem \cite{scholkopf2001generalized} to construct a multi-step implicit predictor directly from historical input-output trajectories. Based on this formulation, we establish the recursive feasibility and practical closed-loop stability of the nominal scheme without relying on conventional terminal ingredients. 
Recognizing the computational bottlenecks of nonconvex optimization and the inevitability of measurement noise, we subsequently extend the baseline framework. We adopt a penalty relaxation strategy to enable efficient first-order optimization, and encapsulate bounded noise and representation mismatches within a unified uncertainty bound to guarantee robustness. 
Ultimately, we consider slowly time-varying nonlinear systems, for which a
predictor constructed from a fixed offline dataset may no longer capture
the current input-output behavior. To address this issue, we maintain an
ALD-managed fixed-budget dictionary along the closed-loop operation and
periodically update the predictor components in the
DDKPC formulation. We show that the recursive feasibility and practical
stability arguments extend to this time-varying setting under suitable
bounded-rate and prediction-error conditions.

In summary, the main contribution of this work is threefold:
\begin{itemize}
    \item [c1)] A nominal DDKPC framework is established for unknown nonlinear systems, providing provable recursive feasibility and practical stability without terminal ingredients, alongside a penalty relaxation strategy for efficient real-time computation;
    \item [c2)] The closed-loop theoretical guarantees are systematically extended to accommodate measurement noise by aggregating data uncertainty and representation mismatch into a unified bound; and, 
    \item [c3)] The DDKPC framework is extended to slowly time-varying nonlinear
systems via an ALD-managed online dictionary and periodic updates of the
implicit predictor components, with recursive feasibility and practical
stability retained under bounded-rate evolution and uniform online
prediction-error conditions.
\end{itemize}

The most relevant prior works lie in three categories:
(i) kernelized or linearization-based DDPC schemes for nonlinear systems,
e.g., \cite{huang2023robust,berberich2021linearII};
(ii) online data-driven control for time-varying systems, e.g.,
\cite{liu2023timevarying}; and
(iii) Gaussian process-based MPC (GP-MPC), e.g.,
\cite{maiworm2021onlineGPmpc,scampicchio2025gaussian}.
Key distinctions from these studies are summarized below:

\begin{itemize}
    \item \emph{Compared with \cite{huang2023robust}.}
    While \cite{huang2023robust} robustifies a kernelized DDPC formulation through a min-max design, our approach introduces a slack variable to account for the mismatch between the kernel-based multi-step predictor and the true system evolution. More importantly, we establish recursive feasibility and practical closed-loop stability, and further extend the analysis to slowly time-varying nonlinear systems.

    \item \emph{Compared with \cite{berberich2021linearII}.}
    The method in \cite{berberich2021linearII} exploits online data to construct local linear approximations of input-affine nonlinear systems via Willems et al.'s fundamental lemma \cite{willems2005note}. In contrast, our framework addresses general nonlinear input-output dynamics through a kernel-based implicit multi-step predictor, with the analysis relying on representer theory in \cite{scholkopf2001generalized} rather than local trajectory parametrization.

    \item \emph{Compared with \cite{liu2023timevarying}.}
    Although our online scheme also uses periodic updates, as in \cite{liu2023timevarying}, the problem setting and analysis are different. 
    Specifically, \cite{liu2023timevarying} considered unknown linear time-varying (LTV) state-space systems and periodically updates stabilizing feedback gains, whereas we consider slowly time-varying nonlinear input-output systems and periodically update the implicit predictor components entering the DDKPC problem.

    \item \emph{Compared with GP-MPC.}
    Since GP regression is also built on covariance kernels, our framework is conceptually related to GP-MPC \cite{maiworm2021onlineGPmpc,scampicchio2025gaussian}. However, standard GP-MPC schemes typically learn probabilistic dynamics models and require uncertainty propagation across the prediction horizon. In contrast, we use deterministic kernel regression to construct a multi-step implicit input-output predictor, avoiding recursive propagation through unknown nonlinear dynamics. Moreover, the proposed formulation does not require full state measurements or conventional terminal ingredients, while still providing recursive feasibility and practical stability guarantees.
\end{itemize}

A preliminary conference version of this paper was submitted to IFAC conference \cite{liu2026ifac}. 
Compared with the conference version, this paper makes three substantial extensions. 
First, it provides a detailed proof of the nominal recursive feasibility and practical stability result, which was only outlined in the conference version due to space limitations. 
Second, it develops a computationally tractable robust DDKPC formulation for real-time and noisy-data implementation. 
Third, it extends the framework to slowly time-varying nonlinear systems via an ALD-managed fixed-budget online dictionary and periodic predictor updates, while preserving recursive feasibility and practical stability.

\emph{Notation.} The set of real numbers is denoted by $\mathbb{R}$, and $\mathbb{R}_{>0}$ denotes the set of positive real numbers. The sets of non-negative and positive integers are denoted by $\mathbb{N}$ and $\mathbb{N}_{+}$, respectively. The sets of real symmetric positive semidefinite and positive definite matrices in $\mathbb{R}^{n \times n}$ are written as $\mathbb{S}^{n }$ and $\mathbb{S}^{n}_{>0}$.
For a vector $x \in \mathbb{R}^{n_x}$, $\|x\|$ denotes its Euclidean norm, and for a matrix $M$, $\|M\|$ denotes its spectral norm. For matrices $P_1 \in \mathbb{S}^{n_1}, \ldots, P_k \in \mathbb{S}^{n_k}$, we denote the smallest and largest eigenvalues among them by $\lambda_{\min}(P_1,\ldots,P_k)$ and $\lambda_{\max}(P_1,\ldots,P_k)$, respectively.
For vectors $V_0 \in \mathbb{R}^{n_0}, \ldots, V_k \in \mathbb{R}^{n_k}$, we use ${\rm col}(V_0,\ldots,V_k)$ to denote the stacked vector $\smat{V_0^\top & \cdots & V_k^\top}^\top$. A stacked window of a sequence $\{x_t\}_{t=t_1}^{t_2}$ is written as
$x_{[t_1,t_2]} := \smat{x_{t_1}^\top & \cdots & x_{t_2}^\top}^\top$.
For a vector $x \in \mathbb{R}^{n}$, $x_{[i]}$ denotes its $i$-th component, and $x_{[i:j]} := [x_{[i]}~\cdots~x_{[j]}]^\top$. For a matrix $A \in \mathbb{R}^{m \times n}$, $A_{[:,i]}$ denotes its $i$-th column, $A_{[i,:]}$ its $i$-th row, and $A_{[i,j]}$ its $(i,j)$-th entry. The vector $\mathbf{1}_a \in \mathbb{R}^{1 \times n}$ denotes the row vector whose $a$-th element is $1$ and all other elements are zero.
For functions $f(\cdot)$ and $h(\cdot)$, $h \circ f$ denotes the composition $h(f(\cdot))$, and $f^{[k]} := \underbrace{f \circ \cdots \circ f}_{k~\text{times}}$ denotes the $k$-fold composition of $f$.
Given $N,L \in \mathbb{N}_+$, the Hankel matrix of order $L$ of a length-$N$ trajectory $x_{[0,N -1]}$ is denoted by $H_{L}(x_{[0,N -1]}):=\smat{
	x_{0} & x_{1} & \ldots & x_{N-L} \\
	\vdots & \vdots & \ddots & \vdots \\
	x_{L-1} & x_{L} & \ldots & x_{N-1}
	}$.

\section{Preliminaries and Problem Formulation}
This section begins with a brief review of kernel-based function representation, followed by the problem formulation and the development of a data-driven kernel-based predictor.

\subsection{Kernels and their RKHS}
\label{sec:pre:ker-def}

We first recall the definition of a positive semidefinite kernel.
\begin{definition}
{\cite[Definition 1]{maddalena2021deterministickernel}}.
\label{def:positive-kernel}
Let $\Omega \subset \mathbb{R}^n$ be a nonempty set, $N \in \mathbb{N}_+$, and consider pairwise distinct points $\{x_0,\dots,x_{N-1}\} \subseteq \Omega$.  
A continuous function $K:\Omega \times \Omega \to \mathbb{R}$ is called a \emph{positive semidefinite kernel} on $\Omega$ if $\sum_{i=0}^{N-1} \sum_{j=0}^{N-1} w_i w_j K(x_i,x_j) \ge 0$
for any nonzero set of weights $\{w_0,\dots,w_{N-1}\} \subset \mathbb{R}$.

\end{definition}
Next, we recall the notion of a reproducing kernel.
\begin{definition}
    {\cite[p. 343]{aronszajn1950theory-reproducing-ker}}.
    \label{def:reproducing-kernel}
    Let $\mathcal{H}$ be a real Hilbert space of functions $g:\Omega \to \mathbb{R}$.  
A function $K:\Omega \times \Omega \to \mathbb{R}$ is a \emph{reproducing kernel} of $\mathcal{H}$ if:
\begin{itemize}
    \item[i)] For every $y \in \Omega$, the function $K(\cdot,y)$ belongs to $\mathcal{H}$.
    \item[ii)] For every $y \in \Omega$ and every $q \in \mathcal{H}$, $ q(y) = \langle q(\cdot), K(\cdot,y) \rangle_{\mathcal{H}}$,
    where $\langle \cdot,\cdot \rangle_{\mathcal{H}}$ denotes the inner product in $\mathcal{H}$.  
    This property is called the \emph{reproducing property}.
\end{itemize}
\end{definition}
The Moore-Aronszajn theorem \cite[p.~344]{aronszajn1950theory-reproducing-ker} states that any positive semidefinite and symmetric kernel $K$ uniquely determines a Hilbert space for which $K$ is the reproducing kernel.  
Such a space is called a \emph{reproducing kernel Hilbert space} (RKHS).
Hence, from Definition~\ref{def:reproducing-kernel}, for any $y \in \Omega$ the function $K(\cdot,y)$ belongs to the RKHS~$\mathcal{H}$.  
Moreover, for any finite expansion
$q(\cdot) = \sum_{i = 0}^{N-1} w_i K(\cdot,x_i)$ where $N \in \mathbb{N}_+$, $w_i \in \R$ and $x_i \in \Omega$, we have that $q \in \mathcal{H}$ and its RKHS function norm is $\|q\|_{\mathcal{H}} := \sqrt{\langle q,q \rangle}_{\mathcal{H}}$.
Furthermore, $ \|q\|^2_{\mathcal{H}} = \sum_{i = 0}^{N - 1} \sum_{j = 0} ^{N - 1} w_i w_j K(x_i,x_j) = w ^\top \bar{\textbf{K}} w$ where $w = [w_0~\cdots~w_{N - 1}]^\top$ and $\bar{\textbf{K}} \in \mathbb{S}^{N \times N}$ is the Gram matrix with $\bar{\textbf{K}}_{i,j} = K(x_i,x_j)$.

\subsection{Representer Theorem}
\label{sec:pre:ker-repre}
The RKHS concepts introduced above show how kernels $K$ induce structured function spaces for representing nonlinear mappings $q(\cdot)$. Building on this foundation, we now revisit the representer theorem, which plays an important role in this paper.

Consider a continuous positive semidefinite reproducing kernel $K:\Omega \times \Omega \rightarrow \mathbb{R}$ with RKHS~$\mathcal{H}$, and let $q:\Omega \to \mathbb{R}$ be an unknown function belonging to $\mathcal{H}$.  
Suppose $q$ generates the data points $(x_i, y_i)$, $i=0,\dots,N-1$, where $y_i = q(x_i)$.  
The goal is to compute an estimator $q^{ker}$ that minimizes
\begin{equation}\label{eq:pre:reg-cost}
    \sum_{i=0}^{N-1} \|y_i - q^{ker}(x_i)\|^2 
    + \gamma \|q^{ker}\|_{\mathcal{H}}^2,
\end{equation}
where $\gamma > 0$ is a regularization parameter.

By the representer theorem \cite[Theorem~1]{scholkopf2001generalized}, the minimizer takes the form $q^{ker}(x) = w \textbf{k}(x)$ where $\textbf{k}(x) := \bmat{K(x,x_0) & K(x,x_1) & \cdots & K(x,x_{N - 1})}^{\top}$ and $w \in \R^{1\times N}$ is the coefficient row vector.
The functions $K(x,x_i)$ are called kernel-based basis functions that are the kernels centered at the data points $x_i$, $i = 0,\cdots, N - 1$.
The number of kernel-based basis functions is equal to the number of  data points, and when the dataset is fixed, determining $q^{ker}(x)$ is equivalent to computing the coefficients $w$.

It follows from~\cite{maddalena2021deterministickernel} that
\begin{equation}\label{eq:pre:ker-predictor}
    q^{ker}(x)
    = Y (\gamma I + \bar{\mathbf{K}})^{-1} \mathbf{k}(x)
\end{equation}
where $Y = \bmat{y_0 & y_1 & \cdots & y_{N - 1}}$
and $\bar{\textbf{K}}$ is the Gram matrix.
The estimator~\eqref{eq:pre:ker-predictor} admits the following deterministic finite-sample error bound. 
\begin{lemma}{\cite[Lemma 1]{lahr2025kernelbound}}.
\label{lem:pre:kernel-est-error}
    Suppose $q \in \mathcal{H}$, where $\mathcal{H}$ is the RKHS associated with a continuous positive semidefinite kernel $K:\Omega \times \Omega \to \mathbb{R}$, and assume an a priori bound $\|q\|_{\mathcal{H}} \le \Gamma$.  
Then, for all $x \in \Omega$,
\begin{align}\label{eq:pre:kernel-error}
    |q^{ker}(x) - q(x)|
    &\le 
    \sqrt{\,\Gamma^2 - Y(\bar{\mathbf{K}} + \gamma I)^{-1} Y^\top\,} \\
    &\hspace{0.7cm}\times 
    \sqrt{\,K(x,x) - 
    \mathbf{k}(x)^\top (\bar{\mathbf{K}} + \gamma I)^{-1} \mathbf{k}(x)\,}. \nonumber
\end{align}
\end{lemma}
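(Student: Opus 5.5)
The plan is to work entirely inside the RKHS $\mathcal{H}$ and use the reproducing property to write the pointwise error as an inner product. Let $S:\mathcal{H}\to\mathbb{R}^N$ be the sampling operator $Sq = (q(x_0),\dots,q(x_{N-1}))^\top$. Its adjoint is $S^*c = \sum_i c_i K(\cdot,x_i)$. Then $SS^* = \bar{\mathbf{K}}$ and $Y^\top = Sq$. The estimator \eqref{eq:pre:ker-predictor} can be rewritten as $q^{ker} = S^*(\bar{\mathbf{K}}+\gamma I)^{-1}Sq = (S^*S+\gamma I)^{-1}S^*S\, q$, using the push-through identity. Hence the error is
\begin{equation*}
q - q^{ker} = \gamma (S^*S+\gamma I)^{-1} q .
\end{equation*}
By the reproducing property,
\begin{equation*}
q(x)-q^{ker}(x) = \langle \gamma (S^*S+\gamma I)^{-1} q,\, K(\cdot,x)\rangle_{\mathcal{H}} .
\end{equation*}

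Next, split the self-adjoint positive operator $\gamma(S^*S+\gamma I)^{-1}$ symmetrically as $A^{1/2}A^{1/2}$ with $A := \gamma (S^*S+\gamma I)^{-1}$. Applying Cauchy–Schwarz in $\mathcal{H}$ gives
\begin{equation*}
|q(x)-q^{ker}(x)| \le \langle Aq,q\rangle^{1/2}_{\mathcal{H}}\, \langle AK(\cdot,x),K(\cdot,x)\rangle^{1/2}_{\mathcal{H}} .
\end{equation*}
Each factor is then evaluated using $A = I - S^*(\bar{\mathbf{K}}+\gamma I)^{-1}S$, which follows from the push-through identity again.

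For the first factor,
\begin{equation*}
\langle Aq,q\rangle = \|q\|_{\mathcal{H}}^2 - (Sq)^\top(\bar{\mathbf{K}}+\gamma I)^{-1}Sq = \|q\|^2_{\mathcal{H}} - Y(\bar{\mathbf{K}}+\gamma I)^{-1}Y^\top .
\end{equation*}
This is at most $\Gamma^2 - Y(\bar{\mathbf{K}}+\gamma I)^{-1}Y^\top$ by the a priori bound. The remaining expression is nonnegative because $A\succeq 0$.

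For the second factor, note that $SK(\cdot,x) = \mathbf{k}(x)$ and $\|K(\cdot,x)\|^2_{\mathcal{H}} = K(x,x)$. This gives exactly $K(x,x)-\mathbf{k}(x)^\top(\bar{\mathbf{K}}+\gamma I)^{-1}\mathbf{k}(x)$. Combining the two factors yields \eqref{eq:pre:kernel-error}.

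The main obstacle is justifying the operator identities rigorously in a possibly infinite-dimensional $\mathcal{H}$. The key identity is $(S^*S+\gamma I)^{-1}S^* = S^*(SS^*+\gamma I)^{-1}$. It follows by multiplying out, since $S$ has finite rank and $S^*S+\gamma I$ is boundedly invertible for $\gamma>0$.

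If one prefers to avoid operator language, an equivalent route is to decompose $q = q_\parallel + q_\perp$. Here $q_\parallel$ is the projection onto $\mathrm{span}\{K(\cdot,x_i)\}$. One then computes the error explicitly in finite dimensions, noting that $q_\perp$ vanishes at the data points. This alternative requires handling a possibly singular $\bar{\mathbf{K}}$, and the regularization $\gamma>0$ is what makes that work.
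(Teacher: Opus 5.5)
Your proof is correct and complete. The identity $q-q^{ker}=\gamma(S^*S+\gamma I)^{-1}q$, the representation $A=I-S^*(\bar{\mathbf{K}}+\gamma I)^{-1}S$, and Cauchy--Schwarz for the positive semidefinite form $(u,v)\mapsto\langle Au,v\rangle_{\mathcal{H}}$ give exactly the two factors. Nothing in the argument needs $\bar{\mathbf{K}}$ to be invertible or the data points to be distinct.

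The paper itself gives no proof of this lemma; it imports it from \cite{lahr2025kernelbound}. The setting there also allows energy-bounded measurement noise, and the paper relies on that generality later, in Section~\ref{sec:discussion-noise}.

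Your operator route gives a short, self-contained derivation. It also shows why the regularized, non-interpolating estimator still obeys a power-function-type bound. $A$ takes the place of the orthogonal projection onto $\ker S$ from the interpolation case (indeed $A\to P_{\ker S}$ as $\gamma\to 0$), with $0\preceq A\preceq I$ standing in for idempotence.

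What your argument does not give, as written, is the noisy-data version, because it uses $Y^\top=Sq$ exactly. It extends if you work on $\mathcal{H}\times\mathbb{R}^N$ with inner product $\langle f,g\rangle_{\mathcal{H}}+e^\top d/\gamma$ and sampling map $T(f,e)=Sf+e$. Then $TT^*=\bar{\mathbf{K}}+\gamma I$, and the first component of the minimum-norm interpolant $T^*(TT^*)^{-1}Y^\top$ is $q^{ker}$. Cauchy--Schwarz with the orthogonal projection onto $\ker T$ then gives the same bound, with $\Gamma^2$ replaced by a bound on $\|q\|_{\mathcal{H}}^2+\|\epsilon\|^2/\gamma$, where $\epsilon$ is the noise vector.
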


With these preliminaries in place, we are now ready to formulate the problem addressed in this paper.

\subsection{Problem formulation}
\label{sec:pre:prob-form}
Consider the following nonlinear discrete-time system
\begin{subequations}\label{eq:sys:x-y}
    \begin{align}
        x_{t + 1} &= f_0(x_t,u_t)\\
        y_t &= b_0(x_t) 
    \end{align}
\end{subequations}
where $x\in \R^{n}$, $u \in \R^m$, and $y \in \R^p$ are the state, input, and output, respectively.
Functions $f_0(\cdot,\cdot)$ and $b_0(\cdot)$ are unknown.
We assume that $(x^s,u^s)$ is a known unstable equilibrium of the system, and that its output satisfies $y^s = b_0(x^s)$.
The objective is to design a DDPC framework, based only on input-output data, that stabilizes the system dynamics around the equilibrium.
To this end, we first recharacterize system \eqref{eq:sys:x-y} in terms of its input-output behavior, leading to the following assumption.
\begin{assumption}
\label{as:io-equiv}
    There exists a known integer $\eta >0$ such that the state  
    \begin{equation}
\label{eq:xi}
\xi_t = \smat{
u_{t - \eta}^\top &
\cdots & u_{t - 1}^\top &
y_{t - \eta}^\top &
\cdots & y_{t - 1}^\top
}^\top \in \R^{n_{\xi}}
    \end{equation}
with dynamics
\begin{subequations}\label{eq:sys:xi-y}
        \begin{align}
            \xi_{t + 1} &= f(\xi_t,u_t) \label{eq:sys:xi-y:xi}\\
        y_t &= b(\xi_t) \label{eq:sys:xi-y:y}
        \end{align}
    \end{subequations}
    has the same input-output behavior as the system \eqref{eq:sys:x-y}.
\end{assumption}
\begin{remark}
\label{rmk:io-equiv}
    When system \eqref{eq:sys:x-y} is a linear time-invariant (LTI) system, i.e.,
\begin{subequations}\label{eq:sys:lti:x-y}
    \begin{align}
        x_{t + 1} & = Ax_t+ Bu_t\\
        y_t & = Cx_t + D u_t ,
    \end{align}
\end{subequations}
Assumption \ref{as:io-equiv} is satisfied with $\eta \ge \eta_0$ where $\eta_0$ is the observability index of the system \eqref{eq:sys:lti:x-y}.
In the nonlinear case, a realization of the form \eqref{eq:sys:xi-y} 
exists under suitable observability and finite relative degree 
conditions on system \eqref{eq:sys:x-y}, which ensure that the mapping 
from the state $x_t$ to a finite input-output segment is locally 
injective \cite[Chapters 3-4]{isidori1995nonlinear}. 
Such finite-memory input-output representations are classical in 
nonlinear system theory and include, e.g., nonlinear autoregression 
with exogenous inputs (NARX) models and systems with well-defined 
relative degree. 
While these conditions do not hold universally, they are satisfied 
by a broad class of nonlinear systems encountered in practice.
\end{remark}

Apart from Assumption~\ref{as:io-equiv}, our knowledge of~\eqref{eq:sys:x-y} is based on data collected in an experiment on~\eqref{eq:sys:x-y}.
The experiment returns, for some integer $N \ge \eta$, a sequence of input-output data $\mathcal{D}: = \{u^{\rm d}_{[0,N - 1]}$, $y^{\rm d}_{[0,N - 1]}\}$ and their associated Hankel matrix $H_{L + \eta}(u^{\rm d}, y^{\rm d})$. 
Here, the superscript ``d" denotes data collected offline.
Let $L \in \mathbb{N}_+$ be the prediction horizon, and
$N_c := N - \eta - L + 1$.
Partition the Hankel matrix $H_{L + \eta}(u^{\rm d}, y^{\rm d})$ into 
\begin{equation}
    \smat{
    U_p\\
    U_F
    } := H_{\eta + L}(u^{\rm d}),~\smat{
    {Y}_p\\
    {Y}_F
    } := H_{\eta + L}(y^{\rm d})
\end{equation}
where $U_P \in \R^{m \eta \times N_c}$, $U_F \in \R^{m L \times N_c}$, ${Y}_P \in \R^{p \eta \times N_c}$, ${Y}_F \in \R^{pL \times N_c}$.
For the LTI system \eqref{eq:sys:lti:x-y}, it has been shown by the fundamental lemma in \cite{willems2005note} that if $u^{\rm d}$ is persistently exciting with $L + \eta + n$ order, then rank($H_{L + \eta}(u^{\rm d},y^{\rm d})) = m(\eta + L) + n$ is always satisfied.
Under this condition, ${\rm col}(u_{ini},\bar{u},{y}_{ini}, \bar{y}) \in \R^{(m + p)(\eta + L)}$ is a trajectory of \eqref{eq:sys:x-y} if and only if there exists a vector $g \in \R^{N_c}$, such that 
\begin{equation}\label{eq:data-repre-LTI}
    \smat{
    U_P\\
    {Y}_P\\
    U_F\\
    {Y}_F
    } g = \smat{
    u_{ini}\\
    {y}_{ini}\\
    \bar{u}\\
    \bar{y}
    }.
\end{equation}
According to Assumption~\ref{as:io-equiv}, the initial trajectory $\xi_{ini}:={\rm col}(u_{ini}, y_{ini}) \in \mathbb{R}^{(m+p)\eta}$ can be viewed as setting the initial condition for the future (to-be-predicted) trajectory ${\rm col}(\bar{u},\bar{y}) \in \mathbb{R}^{(m+p)L}$. In other words, given $(u_{ini}, y_{ini})$, every future input trajectory $\bar{u}$ uniquely determines the corresponding future output trajectory $\bar{y}$ through~\eqref{eq:data-repre-LTI}. We refer interested readers to \cite{berberich2019data,Liu2023data,Coulson2019data} for detailed implementations of DDPC schemes for LTI systems.

However, the fundamental lemma holds only for LTI systems, which limits the practical applicability of existing DDPC schemes, as real-world systems are typically nonlinear. Motivated by this limitation, our aim is to develop a new DDPC scheme based on the data $(u^{\rm d}_{[0,N-1]}, y^{\rm d}_{[0,N-1]})$ that stabilizes the nonlinear system~\eqref{eq:sys:x-y} while providing performance guarantees. To this end, we resort to the kernel-based representation reviewed in Sections \ref{sec:pre:ker-def} and \ref{sec:pre:ker-repre} and replace~\eqref{eq:data-repre-LTI} with a kernel-based representation.

\subsection{Kernel-based multi-step predictor}
\label{sec:pre:multi-step-predictor}
Given an initial condition $(u_{ini}, y_{ini})$ and an input sequence 
$\bar{u} \in \mathbb{R}^{mL}$, we predict, for each $a \in [1,pL]$, the $a$-th 
element of the future output sequence $\bar{y} \in \mathbb{R}^{pL}$ generated 
by the nonlinear system~\eqref{eq:sys:x-y} using the multi-step predictor
\begin{equation}\label{eq:multi-step-predictor}
    \bar{y}_{[a]} = q_{[a]}(u_{ini}, y_{ini}, \bar{u}_{[1:mL]}).
\end{equation}
Here, $\bar{y} = \bmat{\bar{y}_{[1]} & \cdots & \bar{y}_{[pL]}}^\top$ is the 
predicted output. We first justify the structure in 
\eqref{eq:multi-step-predictor} by showing that, when $q_{[a]}$ is chosen 
appropriately, $\bar{y}_{[a]} = y_{[a]}$ holds for all $a \in [1,pL]$.

Specifically, by iterating the dynamics~\eqref{eq:sys:xi-y}, one obtains
\begin{subequations}\label{eq:y-iter0}
    \begin{align}
        &\bar{y}_{[1:p]} = b(\xi_{ini}), \\
        &\bar{y}_{[p+1:2p]} = b \circ f(\xi_{ini}, \bar{u}_{[1:m]}), \\
        &\qquad \cdots \nonumber \\
        &\bar{y}_{[(L-1)p+1:pL]} 
            = b \circ f^{[L-1]}(\xi_{ini}, \bar{u}_{[1:mL]}).
    \end{align}
\end{subequations}
Under the standard causality assumption (the output at time $t$ is 
independent of inputs $u_{t+1}, u_{t+2},\dots$), each element satisfies, 
for $a \in [1,p]$,
\begin{subequations}\label{eq:y-iter1}
    \begin{align}
        &\bar{y}_{[a]}  
            = \mathbf{1}_a b(\xi_{ini}, \bar{u}_{[1:mL]}) 
            \triangleq q_{[a]}(\xi_{ini}, \bar{u}_{[1:mL]}), \\
        &\bar{y}_{[a+p]}  
            = \mathbf{1}_a b \circ f(\xi_{ini}, \bar{u}_{[1:mL]}) 
            \triangleq q_{[a+p]}(\xi_{ini}, \bar{u}_{[1:mL]}), \\
        &\qquad \cdots \nonumber \\
        &\bar{y}_{[a+(L-1)p]}  
            = \mathbf{1}_a b \circ f^{[L-1]}(\xi_{ini}, \bar{u}_{[1:mL]}) \\
        &~~\qquad \qquad \triangleq 
            q_{[a+(L-1)p]}(\xi_{ini}, \bar{u}_{[1:mL]}).
    \end{align}
\end{subequations}
Thus, the predictor~\eqref{eq:multi-step-predictor} exactly reproduces the 
true output of~\eqref{eq:sys:x-y} when the functions $q_{[a]}$ coincide with 
the compositions of $f$ and $h$ as in~\eqref{eq:y-iter1}. However, because 
$f$ and $h$ are unknown in our data-driven setting, directly constructing 
$q_{[a]}$ is infeasible.

Therefore, 
we seek a data-driven multi-step predictor that maps $ {\rm col}(\xi_{ini}, \bar{u}_{[1:mL]})$
to the future output sequence. For the subsequent closed-loop analysis, 
such a predictor should not only be computable from data, but also admit 
a tractable characterization of its approximation error. Motivated by 
these requirements, we adopt a kernel-based representation, since it 
yields a finite-dimensional predictor through the representer theorem 
and admits an explicit estimation error bound. Specifically, we 
approximate the functions $q_{[a]}$ using kernel-based representations 
fitted to the observed data contained in ${\rm col}(U_P, Y_P, U_F, Y_F)$. 
To this end, we impose the following assumption, which is commonly used in, e.g., \cite{hu2023learning}.

\begin{assumption}[Kernel representation]
\label{as:kernel-representation}
    For every $a \in [1,pL]$, the function $q_{[a]}$ belongs to an RKHS 
    $\mathcal{H}$ induced by a continuous positive semidefinite kernel 
    $K : \Omega \times \Omega \to \mathbb{R}$ with 
    $\Omega \subset \mathbb{R}^{(m+p)\eta + mL}$. Moreover, $\|q_{[a]}\|_{\mathcal{H}} \le \Gamma$
    for some known constant $\Gamma > 0$.
\end{assumption}

Letting 
$\zeta := {\rm col}(u_{ini}, y_{ini}, \bar{u}_{[1:mL]}) \in \Omega$ 
and, for $j \in [1,N_c]$,  
$\zeta^{\rm d}_j := {\rm col}(U_{P[:,j]}, Y_{P[:,j]}, U_{F[:,j]})$, 
the kernel-based estimate of the output is given by
\begin{equation}\label{eq:kernel-estimator}
    y^{ker}_{[1:pL]}
        = Y_F (\bar{\mathbf{K}} + \gamma I)^{-1}
          \mathbf{k}(\zeta),
\end{equation}
where $\gamma > 0$ is a regularization parameter,
$\mathbf{k}(\zeta) = [K(\zeta^{\rm d}_1,\zeta) $ $ \dots 
K(\zeta^{\rm d}_{N_c},\zeta)]^\top$, and the Gram matrix 
$\bar{\mathbf{K}} \in \mathbb{S}^{N_c \times N_c}$ satisfies 
$\bar{\mathbf{K}}_{[i,j]} = K(\zeta^{\rm d}_i, \zeta^{\rm d}_j)$ constructed using purely offline data.

Since the true output satisfies~\eqref{eq:multi-step-predictor} under 
$\zeta$, Assumption~\ref{as:kernel-representation} together with 
Lemma~\ref{lem:pre:kernel-est-error} yields the estimation error bound
\begin{align}\label{eq:kernel-error}
    |y^{ker}_{[a]} - y_{[a]}|
        &\le d(\zeta) \nonumber \\
        &:= \sqrt{\Gamma^2 
            - Y_{F[a,:]}(\bar{\mathbf{K}} + \gamma I)^{-1} 
              Y_{F[a,:]}^\top} \\
        &\quad \times 
        \sqrt{K(\zeta,\zeta) 
            - \mathbf{k}(\zeta)^\top
              (\bar{\mathbf{K}} + \gamma I)^{-1}
              \mathbf{k}(\zeta)}. \nonumber
\end{align}

We further impose the following boundedness assumptions on the error term 
and the kernel.

\begin{assumption}\label{as:bar_d}
    For all $\zeta \in \Omega$, there exists a known constant 
    $\bar{d} > 0$ such that $d(\zeta) \le \bar{d}$.
\end{assumption}

\begin{assumption}\label{as:bound-kernel}
    For any $\zeta_1, \zeta_2 \in \Omega$, there exists 
    $\bar{k} > 0$ such that 
    $\lvert K(\zeta_1,\zeta_2) \rvert \le \bar{k}$.
\end{assumption}

\begin{remark}\label{rmk:assumptions_kernel}
    Assumption~\ref{as:bar_d} is readily satisfied on any compact prediction
domain. Specifically, if
$(u_{ini},y_{ini})$ remains in a compact set along the closed-loop operation
and the predicted input sequence satisfies $\bar u\in\mathcal U^L$ with
compact $\mathcal U$, then
$\zeta=\mathrm{col}(u_{ini},y_{ini},\bar u)$ belongs to a compact subset of
$\Omega$. Since $d(\zeta)$ is continuous under the regularized kernel
predictor, it admits a finite upper bound on this compact set.
    Moreover, for bounded kernels such as the 
    Gaussian kernel $K(\zeta_i,\zeta_j)
        = \sigma_f^2 \exp\!\left(
            -\tfrac{1}{2}(\zeta_i - \zeta_j)^\top 
            \Upsilon (\zeta_i - \zeta_j)
          \right)$, where $\sigma_f$ and $\Upsilon$ are hyperparameters,
    Assumptions~\ref{as:bar_d} and~\ref{as:bound-kernel} hold without requiring 
    constraints on $\bar{u}$.
    Although Assumptions~\ref{as:kernel-representation}--\ref{as:bound-kernel} are standard in the literature, obtaining the exact analytical values of $\Gamma$ and $\bar{d}$ is generally intractable for unknown black-box systems. However, this does not hinder practical implementation. As shown in the subsequent theoretical analysis and the objective design, $\bar{d}$ primarily serves as a scaling factor to balance the penalty terms ($\lambda_h/\bar{d}$ and $\lambda_g\bar{d}$) when constructing feasible candidate solutions. Therefore, an exact analytical $\bar{d}$ is not strictly required in practice; knowing its approximate order of magnitude is sufficient to maintain appropriate parameter separation, which can be readily achieved via numerical tuning.
\end{remark}

Building on this kernel-based interpolation method, system \eqref{eq:sys:xi-y:xi} can be regarded as 
\begin{equation}\label{eq:f_ker+d}
    \xi_{t + 1} = f(\xi_t,u_t) =  f^{ker}(\xi_t,u_t) + d^{ker}(\xi_t)
\end{equation}
where $ f^{ker}(\xi_t,u_t) \triangleq \xi_{t + 1}^{ker}$ is formulated using $y^{ker}$ from \eqref{eq:kernel-estimator} and  $d^{ker}(\xi_t)$ represents the interpolation error satisfying
\begin{align}\label{eq:kernel-error-xi}
    &\|d^{ker}(\xi_t)\| = \|\xi_{t + 1} - \xi^{ker}_{t + 1} \| = \Big\|\smat{
    u_{[t - \eta + 1,t]}\\
    y_{[t - \eta + 1,t]}
    } - \smat{
    u_{[t - \eta + 1,t]}\\
    y^{ker}_{[t - \eta + 1,t]}
    }
    \Big\| \nonumber\\
    &=   \Big\|\smat{
    0\\
    y_{[t - \eta + 1,t]} - y^{ker}_{[t - \eta + 1,t]}
    } 
    \Big\| \overset{\eqref{eq:kernel-error}}{\le}  p d(\zeta) \le p\bar{d}.
\end{align}

So far, we have introduced the kernel-based multi-step predictor \eqref{eq:kernel-estimator}.
Before moving on to the controller design, we formalize several structural assumptions on the nonlinear system \eqref{eq:sys:x-y}. These conditions play a central role in establishing the stability and performance guarantees in the next section.
\begin{assumption}
\label{as:stab}
    There exists a continuous Lyapunov function $V_s(\xi)$, a Lipschitz continuous feedback $\kappa(\xi)$, a compact set $\mathcal{I}$, and constants $c_{s,l}, c_{s,u},\delta^{loc} >0$, $\rho_s \in (0,1)$, such that for any $\xi \in \X_{\delta} :=\{\xi|V_s(\xi) \le \delta^{loc} \}$ and $d \in \mathcal{I}$, it holds that $f(\xi,\kappa(\xi)) + d \in \mathbb{X}_{\delta}$ and 
    \begin{subequations}\label{eq:Vs-property}
        \begin{align}
            & c_{s,l}\|\xi\|^2 \le V_s(\xi) \le c_{s,u}\|\xi\|^2 \label{eq:Vs-property:l-u-bound}\\
            & V_s(f^{}(\xi,\kappa(\xi)) + d) \le \rho_s V_s(\xi) + \alpha_s(\|d\|) \label{eq:Vs-property:decrease}
        \end{align}
    \end{subequations}
    where $\alpha_s(\cdot) \in \mathcal{K}_{\infty}$ \footnote{A function $\alpha : [0,\infty) \rightarrow [0, \infty)$ is said to be of class $\mathcal{K}$ if it is continuous, strictly increasing, and $\alpha(0) = 0$.
			A function $\alpha : [0,\infty) \rightarrow [0, \infty)$ is said to be of class $\mathcal{K}_{\infty}$ if it is of class $\mathcal{K}$ and also unbounded.}.
            Moreover, when $d = 0$, it holds that $V_s(f(\xi,\kappa(\xi))) \le \rho_s V_s(\xi)$. 
\end{assumption}

The following assumption is adapted from the uniformly input-output-to-state stable (UIOSS) in \cite[Definition 3.7]{CAI2008326}.
\begin{assumption}
    \label{as:observ}
    There exists a continuous UIOSS Lyapunov function $V_o(\xi)$, a compact set $\mathcal{I}$, constants $c_{o,l},c_{o,u},\epsilon_o >0$, $\rho_o \in (0,1)$, matrices $Q \in \mathbb{S}^p$ and $R \in \mathbb{S}^m_{>0}$, such that for any $\xi$ and $d \in \mathcal{I}$, it holds that 
\begin{subequations}\label{eq:Vo-property}
    \begin{align}
        & c_{o,l} \|\xi\|^2 \le V_o(\xi) \le c_{o,u} \|\xi\|^2   \label{eq:Vo-property:l-u-bound},\\
        & V_o(f^{}(\xi,u)\! +d) \!- V_o(\xi) \le \!-\epsilon_o \|\xi \|^2 + \!\frac{1}{2}\|y\|_Q^2 + \!\|u\|_R^2 \label{eq:Vo-property:decrease}.
    \end{align}
\end{subequations}

\end{assumption}

\begin{remark}\label{rmk:local-Lya-ctrl}
    Assumptions \ref{as:stab} and \ref{as:observ} are standard in the literature for learning-based and data-driven MPC (see, e.g., \cite{berberich2021linearII}). 
    It is worth emphasizing that the local stabilizing feedback $\kappa(\xi)$ in Assumption \ref{as:stab} is used only for theoretical analysis, and need not be explicitly calculated in practice. 
    On the other hand, such local controllers $\kappa(\xi)$ and the corresponding Lyapunov function $V_s$ can be synthesized offline by several data-driven methods, e.g., Taylor expansion \cite{guo2022data}.
    Nevertheless, even when such a controller is available, it is typically not designed to directly handle hard constraints or to provide the enlarged operating region targeted by MPC. This motivates the proposed DDKPC formulation, which incorporates these features at the optimization level.
   Furthermore, the terms
$\frac{1}{2}\|y\|_Q^2+\|u\|_R^2$ on the right-hand side of
\eqref{eq:Vo-property:decrease} are not essential and can be replaced by
general $\mathcal K_\infty$ functions of $\|y\|$ and $\|u\|$.
The present quadratic form is adopted only for consistency with the stage
cost used in the DDKPC formulation.
\end{remark}

\section{Data-driven Kernel-based Predictive Control}
\label{sec:ddkmp:noise-free}
In this section, we present a DDKPC scheme to control unknown nonlinear system \eqref{eq:sys:x-y} with stability guarantees, followed by its extension in the presence of measurement noise.

\subsection{DDKPC scheme}
At time $t \ge \eta$, given offline collected input-output data $\{u^{\rm d}, y^{\rm d}\}$, past input-output measurements $u_{[t - \eta, t - 1]},y_{[t - \eta, t - 1]}$ of the nonlinear system \eqref{eq:sys:x-y}, we define the following open-loop optimal control problem.
\begin{subequations}\label{eq:ddkmpc-ideal}
    \begin{align}
        J_L(\xi_t):=  \label{eq:ddkmpc-ideal:obj}\\
        \min_{g,\bar{u},\bar{y},h} ~~&\sum_{i = 0}^{L - 1} \ell(\bar{u}_{i|t}, \bar{y}_{i|t}) + \frac{\lambda_h}{\bar{d}}\|h_t\|^2 + \lambda_g \bar{d}\|g_t\|^2 \nonumber\\
         {\rm s.t.} ~~& \bmat{
        \bar{\textbf{K}} + \gamma I\\
        Y_F
        } g_t = \bmat{
        \textbf{k}(u_{ini},y_{ini},\bar{u}_{[0,L - 1]|t})\\
        \bar{y}_{[0,L - 1]|t} + h_{[0,L - 1]|t}
        }, \label{eq:ddkmpc-ideal:ker}\\
        &\bmat{
        u_{ini}\\
        y_{ini}
        } = \bmat{
        u_{[t - \eta,t - 1]}\\
        y_{[t - \eta,t - 1]}
        },\label{eq:ddkmpc-ideal:ini}\\
        & 
        \bar{u}_{i|t} \in \mathcal{U},~~\forall i \in [0,L - 1]\label{eq:ddkmpc-ideal:h-u-constraint}.
    \end{align}  
\end{subequations}
Here, $\bar{u} \in \mathbb{R}^{m L}$ and $\bar{y} \in \mathbb{R}^{p L}$ denote the input and output trajectories predicted at time $t$, with elements $\bar{u}_{i|t} \in \mathbb{R}^m$ and $\bar{y}_{i|t} \in \mathbb{R}^{p}$ corresponding to time $i + t$, respectively.  
Matrix $\bar{\textbf{K}}$ is the Gram matrix as in \eqref{eq:kernel-estimator} constructed using offline data $\{u^{\rm d}, y^{\rm d}\}$.
To maintain feasibility of the optimization problem, a slack variable $h_t = [h_{0|t} \ \cdots \ h_{L-1|t}]^\top \in \mathbb{R}^{pL}$ is introduced to compensate for the kernel representation error $d(\zeta)$.  
Constraint \eqref{eq:ddkmpc-ideal:ker} provides an implicit form of the predictor \eqref{eq:kernel-estimator}, adapted from \cite[eq.~(14)]{huang2023robust} to enhance the feasibility of Problem \eqref{eq:ddkmpc-ideal}.  
Constraint \eqref{eq:ddkmpc-ideal:ini} uses $\eta$ input-output pairs to restrict the state $x$ at time $t$ (cf. Assumption \ref{as:io-equiv}), while constraint \eqref{eq:ddkmpc-ideal:h-u-constraint} enforces the general input constraints.
The objective function \eqref{eq:ddkmpc-ideal:obj} includes a general quadratic stage cost given by $\ell(\bar{u}_{i|t}, \bar{y}_{i|t}) :=\|\bar{u}_{i|t} - u^s\|_R^2 + \|\bar{y}_{i|t} - y^s\|^2_Q$ with $Q \in \mathbb{S}^p_{>0}$, $R \in \mathbb{S}^{m}_{>0}$, as well as regularization terms on $g_t$ and $h_t$ to account for model mismatches between the nonlinear system \eqref{eq:sys:x-y} and the kernel-based predictor \eqref{eq:kernel-estimator}. 
The coefficient structures $\lambda_g\bar{d}$ and $\lambda_h/\bar{d}$ are standard in the robust DDPC literature; see, e.g., \cite{Liu2023data, berberich2021linearII}.
At time $t$, let the optimal solution of Problem \eqref{eq:ddkmpc-ideal} be $(\bar{u}^*_{[0,L - 1]|t}, \bar{y}^*_{[0,L - 1]|t}, g_t^*, h_t^*)$, with optimal cost $J^*_L(\xi_t)$.  
As in standard MPC, Problem \eqref{eq:ddkmpc-ideal} is solved in a receding horizon fashion, summarized in Algorithm \ref{alg:ddkmpc-ideal}.

	\begin{algorithm}[h]
		\caption{DDKPC.}
		\label{alg:ddkmpc-ideal}
		\begin{algorithmic}[1]
			\STATE \textbf{Input:} past data $\mathcal{D} = \{u_{[0,N - 1]}, y_{[0,N - 1]}\}$, feasible input set $\mathcal{U}$, prediction horizon $L$, matrices $Q$, $R$, and parameters $\lambda_h$, $\lambda_g$, $\bar{d}$, $\gamma$.
            \STATE \textbf{Construct:} matrices $\bar{\mathbf{K}}$, $Y_F$, 
            \STATE At time $t$, take past $\eta$ measurements $(u_{[t - \eta,t - 1]},y_{[t - \eta,t - 1]})$ and solve Problem \eqref{eq:ddkmpc-ideal}. \label{alg:mpc1}
			\STATE Apply the input $u_t = \bar{u}^*_{0|t}$ to system \eqref{eq:sys:x-y}.
			\STATE Set $t = t +1$ and go back to \ref{alg:mpc1}.
		\end{algorithmic}
	\end{algorithm}

\begin{remark}
Compared with the strictly convex quadratic DDPC schemes in \cite{berberich2019data,Coulson2019data,Liu2023data}, the proposed DDKPC in \eqref{eq:ddkmpc-ideal} is generally nonconvex in $\bar{u}$ unless the kernel is linear in $\bar{u}$.  
This nonconvexity is an unavoidable consequence of addressing general nonlinear systems, and represents a necessary trade-off for achieving broader applicability.  
In practice, however, efficient nonlinear optimization methods, e.g., \cite{huang2023robust,wang2025optimization}, can be employed to solve \eqref{eq:ddkmpc-ideal}, making this computational challenge tractable.
\end{remark}

\subsection{Closed-loop guarantees}
Without loss of generality, we assume for the analysis that $(u^s,x^s,y^s) = (0,0,0)$.
Further, assume that Problem \eqref{eq:ddkmpc-ideal} is solvable at time $t$.
The following lemma provides an important result for the cost function.

\begin{lemma}[{\cite[Lemma 2]{liu2026ifac}}]
\label{lem:local-cost-bound}
Under Assumption \ref{as:stab}, letting $\delta_s := \frac{\delta^{loc}}{c_{s,u}}$, then for all $\| \xi \|^2 < \delta_s$, Problem \eqref{eq:ddkmpc-ideal} is feasible and there exist a constant $\gamma_s > 0$ and a function $\hat{\alpha}_s(\cdot) \in \mathcal{K}_{\infty}$ such that 
\begin{equation}\label{eq:local-cost-bound}
    J^*_L(\xi) \le \gamma_s \|\xi_t\|^2 + \hat{\alpha}_s(\bar{d}).
\end{equation}
\end{lemma}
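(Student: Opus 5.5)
The plan is to build an explicit feasible candidate for Problem \eqref{eq:ddkmpc-ideal} from the local controller $\kappa$ of Assumption \ref{as:stab}. Its cost then upper bounds $J^*_L(\xi)$. Take $\xi_t$ with $\|\xi_t\|^2<\delta_s$. By \eqref{eq:Vs-property:l-u-bound}, $V_s(\xi_t)\le c_{s,u}\|\xi_t\|^2<\delta^{loc}$, so $\xi_t\in\X_\delta$.

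\textbf{Step 1: the nominal trajectory.} Starting from $\xi_t$, generate the true closed-loop trajectory of \eqref{eq:sys:xi-y} with $d=0$, namely $\xi_{i+1|t}=f(\xi_{i|t},\kappa(\xi_{i|t}))$, $\bar u_{i|t}=\kappa(\xi_{i|t})$ and $\bar y_{i|t}=b(\xi_{i|t})$ for $i\in[0,L-1]$. Invariance of $\X_\delta$ keeps every $\xi_{i|t}$ in $\X_\delta$, and the decrease property gives $V_s(\xi_{i|t})\le\rho_s^iV_s(\xi_t)$. Combining with \eqref{eq:Vs-property:l-u-bound}, we get $\|\xi_{i|t}\|^2\le (c_{s,u}/c_{s,l})\rho_s^i\|\xi_t\|^2$.

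There is one implicit requirement: $\kappa(\xi)\in\mathcal U$ on $\X_\delta$. I would assume this is part of how the local controller is chosen, e.g.\ by shrinking $\delta^{loc}$ so that the input constraint holds there. With $(u^s,y^s)=(0,0)$, $\kappa$ Lipschitz with $\kappa(0)=0$, and $b$ Lipschitz near the origin (or $\|y\|\le\|\xi_{i+1}\|$, since outputs are components of the next $\xi$), the stage costs are bounded by $c\|\xi_{i|t}\|^2$. Summing the geometric series gives $\sum_i\ell\le\gamma_1\|\xi_t\|^2$.

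\textbf{Step 2: choosing $g$ and $h$.} Set $g_t=(\bar{\mathbf K}+\gamma I)^{-1}\mathbf k(\zeta)$ with $\zeta={\rm col}(u_{ini},y_{ini},\bar u)$. This satisfies the first block row of \eqref{eq:ddkmpc-ideal:ker} exactly. Then define $h_t:=Y_Fg_t-\bar y=y^{ker}-\bar y$, which satisfies the second block row. Since $\bar y$ is the true output, \eqref{eq:kernel-error} together with Assumption \ref{as:bar_d} gives componentwise $|h_{[a]}|\le\bar d$, hence $\|h_t\|^2\le pL\bar d^2$. This makes $\frac{\lambda_h}{\bar d}\|h_t\|^2\le\lambda_h pL\bar d$, which is a $\mathcal K_\infty$ function of $\bar d$.

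For $g_t$, Assumption \ref{as:bound-kernel} gives $\|\mathbf k(\zeta)\|\le\sqrt{N_c}\bar k$. Because $\bar{\mathbf K}\succeq0$, we also have $\|(\bar{\mathbf K}+\gamma I)^{-1}\|\le 1/\gamma$. So $\lambda_g\bar d\|g_t\|^2\le\lambda_g\bar d N_c\bar k^2/\gamma^2$, again $\mathcal K_\infty$ in $\bar d$.

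\textbf{Step 3: assembling the bound.} Summing the three contributions yields $J^*_L(\xi_t)\le\gamma_s\|\xi_t\|^2+\hat\alpha_s(\bar d)$. Here $\gamma_s=\gamma_1$ and $\hat\alpha_s(s)=(\lambda_hpL+\lambda_gN_c\bar k^2/\gamma^2)s$, which is linear and hence class $\mathcal K_\infty$. Feasibility follows because the candidate satisfies all constraints \eqref{eq:ddkmpc-ideal:ker}--\eqref{eq:ddkmpc-ideal:h-u-constraint}.

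\textbf{Main obstacle.} I expect Step 1 to be the delicate part. It requires bounding the stage cost by $\|\xi_{i|t}\|^2$, which needs local Lipschitz bounds on $\kappa$ and $b$ and admissibility $\kappa(\xi)\in\mathcal U$. I would handle the output via the shift structure of $\xi$, since $y_{t+i}$ appears in $\xi_{t+i+1}$. A second point to check is that $\zeta\in\Omega$, so that \eqref{eq:kernel-error} applies.
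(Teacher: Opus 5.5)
Your proposal is correct and matches the construction the paper relies on. The proof itself is deferred to the conference version, but its time-varying twin, Lemma~\ref{lem:local-cost-bound-tv}, and the bound on $\mathbf I_3$ in Theorem~\ref{thm:stab-ideal} use the same steps: roll out $\kappa$ on the true system, bound the stage costs through the shift structure by $\lambda_{\max}(Q,R)\sum_{i=1}^{L}\|\xi_{t+i}\|^2$ with geometric decay, set $g_t=(\bar{\mathbf K}+\gamma I)^{-1}\mathbf k(\zeta)$ with $\|g_t\|^2\le c_g$ from Assumption~\ref{as:bound-kernel}, and take $h_t$ as the kernel residual bounded via \eqref{eq:kernel-error}. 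Your remark that $\kappa(\xi)\in\mathcal U$ on $\X_\delta$ and $\zeta\in\Omega$ must be assumed is well taken; the paper leaves both requirements implicit.
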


Leveraging this lemma, we are ready to present our main result.
For the following theoretical analysis, we use the Lyapunov function candidate
\begin{equation}\label{eq:Lyp-func}
    Y_L(\xi_t) := J_L^*(\xi_t) + V_o(\xi_t).
\end{equation}
\begin{theorem}[Nominal stability guarantees]
\label{thm:stab-ideal}
    Let Assumptions \ref{as:io-equiv}--\ref{as:observ} hold.
    Then, for any constant $\bar{Y} > 0$, there exist constants $L_{\bar{Y}}, \gamma_{\bar{Y}}, \bar{d}_0 > 0$ such that for $Y_L(\xi_0) \le \bar{Y}$, $L > L_{\bar{Y}}$, $\bar{d} \le \bar{d}_{0}$, Problem \eqref{eq:ddkmpc-ideal} is feasible for all times $t >0$, and the function $Y_L$ defined in \eqref{eq:Lyp-func} fulfills
    \begin{subequations}\label{eq:YL-property}
        \begin{align}
            \epsilon_o \|\xi_t\|^2 \le Y_{L}(\xi_t) \le \gamma_{\bar{Y}} \|\xi_t\|^2 + \hat{\alpha}_s(\bar{d}) \label{eq:YL-property:l-u-bound}\\
            Y_{L}(\xi_{t + 1}) - Y_{L}(\xi_t)
 \le -a_L \epsilon_o \|\xi_t\|^2 + \alpha_Y(\bar{d}) \label{eq:YL-property:decrease}
 \end{align}
    \end{subequations}
    with $\gamma_{\bar{Y}},a_L > 0$, and $\alpha_Y(\cdot)\in \mathcal{K}_{\infty}$.
\end{theorem}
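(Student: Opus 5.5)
The plan follows the standard route for MPC without terminal ingredients (Grüne-type arguments, adapted in \cite{berberich2021linearII}): first obtain a uniform bound on $J_L^*$ over the sublevel set $\{Y_L\le\bar Y\}$, then use it to show that the optimal predicted trajectory enters the local region $\|\xi\|^2<\delta_s$ of Lemma~\ref{lem:local-cost-bound}, and finally build a feasible candidate at $t+1$ by shifting the optimal solution and appending the local feedback $\kappa$.

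\emph{Step 1: bounds on $Y_L$.} The lower bound in \eqref{eq:YL-property:l-u-bound} follows from $J_L^*\ge0$ and $V_o(\xi)\ge c_{o,l}\|\xi\|^2$. I would rescale $\epsilon_o$ if necessary (or note that $\epsilon_o\le c_{o,l}$ follows from \eqref{eq:Vo-property:decrease} with $V_o\ge0$). For the upper bound there are two regimes.
\begin{itemize}
\item For $\|\xi\|^2<\delta_s$, Lemma~\ref{lem:local-cost-bound} gives $J_L^*\le\gamma_s\|\xi\|^2+\hat\alpha_s(\bar d)$.
\item Outside this ball but inside $\{Y_L\le\bar Y\}$, we have $J_L^*\le\bar Y\le(\bar Y/\delta_s)\|\xi\|^2$.
\end{itemize}
Hence $\gamma_{\bar Y}:=\max\{\gamma_s,\bar Y/\delta_s\}+c_{o,u}$ works.

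\emph{Step 2: the optimal trajectory reaches the local region.} I would evaluate $V_o$ along the optimal predicted trajectory $\bar\xi^*_{i|t}$, which is generated by $f^{ker}$ plus the slack. The true state evolution is $f(\bar\xi,\bar u)+d$, with the mismatch $d$ built from $h^*$ and $d^{ker}$; \eqref{eq:kernel-error-xi} bounds it by $p\bar d+\|h^*\|$, and $\|h^*\|^2\le \bar d J_L^*/\lambda_h$. Summing \eqref{eq:Vo-property:decrease} over $i$ gives
\[
\epsilon_o\sum_i\|\bar\xi^*_{i|t}\|^2\le V_o(\xi_t)+\sum_i\ell(\bar u^*_{i|t},\bar y^*_{i|t})+(\text{terms in }\bar d)\le c\,\bar Y.
\]
Therefore some index $i^*\le L-\eta$ satisfies $\|\bar\xi^*_{i^*|t}\|^2\le c\bar Y/(\epsilon_o(L-\eta))$, which lies below $\delta_s$ once $L>L_{\bar Y}$. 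It is natural to take $i^*$ near the end of the horizon, using an averaging argument over the last portion of it.

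\emph{Step 3: candidate at $t+1$ and decrease.} The candidate is the shift $\bar u_{i|t+1}=\bar u^*_{i+1|t}$ for $i<i^*$. From $i^*$ onward I would apply $\kappa$ and use Assumption~\ref{as:stab} with $d\in\mathcal I$, which keeps the trajectory in $\mathbb X_\delta$ with geometrically decaying cost. The resulting $g$ is obtained by solving the first row of \eqref{eq:ddkmpc-ideal:ker}, which is always possible since $\bar{\mathbf K}+\gamma I$ is invertible, so $\|g\|\le\sqrt{N_c}\bar k/\gamma$ by Assumption~\ref{as:bound-kernel}. The slack $h$ absorbs the difference between the kernel prediction and the actual outputs, which is $O(\bar d)$ by \eqref{eq:kernel-error}. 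Recursive feasibility is then immediate, since only $\bar u\in\mathcal U$ must hold; $\kappa$ respecting $\mathcal U$ locally is implicit in Assumption~\ref{as:stab}. Comparing costs gives
\[
J_L^*(\xi_{t+1})\le J_L^*(\xi_t)-\ell(u_t,y_t)+\text{tail cost}+\alpha(\bar d),
\]
where the tail cost is at most $\frac{c_{s,u}}{1-\rho_s}\cdot\frac{c\bar Y}{L}$-type terms.

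\emph{Step 4: combine.} Adding $V_o(\xi_{t+1})-V_o(\xi_t)\le-\epsilon_o\|\xi_t\|^2+\ell(u_t,y_t)$ cancels the stage cost. The remaining tail term must be dominated by a fraction of $\epsilon_o\|\xi_t\|^2$, and I expect this to be the main obstacle. Outside the local ball it follows from $\|\xi_t\|^2\ge\delta_s$ by taking $L$ large. Inside the ball it instead follows from Lemma~\ref{lem:local-cost-bound}, which makes the tail $O(\|\xi_t\|^2/L)+O(\bar d)$. The $\bar d$-terms are collected into $\alpha_Y$, with $\bar d\le\bar d_0$ chosen so that the invariance $Y_L(\xi_{t+1})\le\bar Y$ is preserved; this closes the induction.
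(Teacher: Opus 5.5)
\paragraph{Overall assessment.} Most of your plan matches the paper's proof: the two-regime bound giving $\gamma_{\bar Y}$, an averaging argument that yields an index where the state lies in $\{\|\xi\|^2\le\delta_s\}$, and the shift-then-$\kappa$ candidate with $g=(\bar{\mathbf K}+\gamma I)^{-1}\mathbf k(\cdot)$ and $h$ absorbing the kernel error. The remaining steps also agree: the tail is controlled through $Y_L(\xi_t)\le\gamma_{\bar Y}\|\xi_t\|^2+\hat\alpha_s(\bar d)$ (your two regimes in Step 4 are equivalent to this), the dropped stage cost cancels against the $V_o$ increment, and $\bar d_0$ is chosen for invariance of $\{Y_L\le\bar Y\}$. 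The problem is Step 2.

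\paragraph{The gap in Step 2.} You sum \eqref{eq:Vo-property:decrease} along the \emph{predicted} windows $\bar\xi^*_{i|t}$ and treat them as a perturbed trajectory of $f$ with defect bounded by $p\bar d+\|h^*\|$. But the predicted outputs are not generated by iterating a one-step map $f^{ker}$ on $\bar\xi^*_{i|t}$. Instead, $\bar y^*_{i|t}+h^*_{i|t}$ is the $i$-th block of the direct multi-step predictor $Y_Fg^*$ evaluated at $\zeta=\mathrm{col}(\xi_t,\bar u^*)$.

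The defect $\bar\xi^*_{i+1|t}-f(\bar\xi^*_{i|t},\bar u^*_{i|t})$ therefore has output component $\bar y^*_{i|t}-b(\bar\xi^*_{i|t})$, which evaluates $b$ at a window the system never visits. The bound \eqref{eq:kernel-error} only compares $Y_Fg^*$ with the true outputs $y_{t+i}=b(\xi_{t+i})$ along the actual trajectory from $\xi_t$ under $\bar u^*$. Bridging $b(\xi_{t+i})$ and $b(\bar\xi^*_{i|t})$ requires a continuity modulus of $b$, so the bound $p\bar d+\|h^*\|$ does not follow. The same substitution is needed for the term $\tfrac12\|b(\bar\xi^*_{i|t})\|_Q^2$ in the UIOSS inequality.

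Even granting this, you only obtain that the \emph{predicted} state $\bar\xi^*_{i^*|t}$ is small. Your candidate, however, applies $\kappa$ at the \emph{actual} state $\xi_{t+i^*}$, which is the state reached under the shifted inputs. Transferring smallness from one to the other is a further missing step.

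\paragraph{How the paper avoids this.} The paper sums \eqref{eq:Vo-property:decrease} along the actual trajectory generated by $\bar u^*_{[0,L-1]|t}$, with $d=0$. It then bounds $\tfrac12\|y_{t+i}\|_Q^2\le\|\bar y^*_{i|t}\|_Q^2+\|y_{t+i}-\bar y^*_{i|t}\|_Q^2$ and splits $y_{t+i}-\bar y^*_{i|t}=\bigl(y_{t+i}-(\bar y^*_{i|t}+h^*_{i|t})\bigr)+h^*_{i|t}$.
\begin{itemize}
\item The first part is bounded by \eqref{eq:kernel-error}; this applies because the first block row of \eqref{eq:ddkmpc-ideal:ker} forces $g^*$ to be the regression weights.
\item The second part is bounded by $\sum_i\|h^*_{i|t}\|^2\le\bar d\bar Y/\lambda_h$.
\end{itemize}
This gives $\epsilon_o\sum_i\|\xi_{t+i}\|^2\le Y_L(\xi_t)+\alpha_1(\bar d)$ directly for the actual states, which is exactly what the candidate construction needs.

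\paragraph{A related imprecision.} The cost comparison removes $\ell(\bar u^*_{0|t},\bar y^*_{0|t})$, not $\ell(u_t,y_t)$. The cancellation works only because of the factor $\tfrac12$ in \eqref{eq:Vo-property:decrease}, together with a mismatch term of order $\bar d$ (kernel error plus $h^*_{0|t}$) that must go into $\alpha_Y$.
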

\begin{proof} 
The proof is conducted by first deriving lower and upper bounds for function $Y_L$.
Then, assuming that the Problem \eqref{eq:ddkmpc-ideal} is feasible at time $t$, we construct a candidate solution to this problem at time $t + 1$.
Finally, practical Lyapunov inequality \eqref{eq:YL-property:decrease} is established. 
For simplicity, let $\ell^*_{i|t} := \ell(\bar{u}^*_{i|t},\bar{y}^*_{i|t}) =  \|\bar{u}^*_{i|t}\|^2_R + \|\bar{y}^*_{i|t}\|^2_Q$.

    \emph{Step I.}
Note that $Y_{L}(\xi_t) = J^*_L(\xi_t) + {V}_o(\xi_t) \ge \ell^*_{i|t}  + V_o(\xi_t) \overset{\eqref{eq:Vo-property:decrease}}{\ge}\epsilon_o \|\xi_t\|^2 $, where the last inequity holds because $V_o(\xi_{t + 1}) > 0$ and \eqref{eq:Vo-property:decrease}.
Next, we derive an upper bound for $Y_L(\xi_t)$.
It follows from Lemma \ref{lem:local-cost-bound} and \eqref{eq:Vo-property:l-u-bound} that for all $\xi_t \in \X_\delta$, we have that $Y_L(\xi_t) \le (\gamma_s + c_{o,u})\|\xi_t\|^2 + \hat{\alpha}_s(\bar{d})$.
On the other hand, for $\xi_t \notin \X_{\delta}$, i.e., $\|\xi_t\|^2 > \delta_s$, we have that $Y_L(\xi_t) \le \bar{Y} = \frac{\bar{Y}}{\|\xi_{t}\|^2}\|\xi_{t}\|^2 \le \frac{\bar{Y}}{\delta_s}  \|\xi_{t}\|^2$.
Hence, the right inequality in \eqref{eq:YL-property:l-u-bound} can be derived by setting $\gamma_{\bar{Y}} := \max\{\gamma_s + c_{o,u},\frac{\bar{Y}}{\delta_s} \}$.

    \emph{Step II.}
    Assume that Problem \eqref{eq:ddkmpc-ideal} is feasible at time $t$, and denote the optimal solution by $(\bar{u}^*_{[0,L - 1]|t}, \bar{y}^*_{[0,L - 1]|t}, g_t^*, h^*_t)$.
    Denote the trajectory of the actual system \eqref{eq:sys:xi-y} generated under input sequence $\bar{u}^*_{[0,L - 1]|t}$ by $\xi_{[t + 1,t + L]}$ and $y_{[t,t + L - 1]}$.
    It follows from \eqref{eq:Vo-property:decrease} that 
    \begin{align}
        V_o(\xi_{t + L}) - V_o(\xi_t) &= \sum_{i = 0}^{L - 1} V_o(\xi_{t + i + 1}) - V_o(\xi_{t + i}) \\
        &\le  \sum_{i = 0}^{L - 1} -\epsilon_o\|\xi_{i+t}\|^2 + \frac{1}{2}\|y_{t + i}\|^2_Q + \|\bar{u}^*_{i|t}\|^2_R.\nonumber
    \end{align}
Noting that $V_o(\xi_{t + L}) >0$, it follows that
\begin{align}
    &\sum_{i = 0}^{L - 1}\epsilon_o\|\xi_{i+t}\|^2 \nonumber\\
    &\le V_o(\xi_t) + \sum_{i= 0}^{L - 1} \|\bar{u}^*_{i|t}\|^2_R + \|\bar{y}^*_{i|t} \|^2_Q + \|\bar{y}^*_{i|t} - y_{t + i}\|^2_Q \nonumber\\
    & \le V_o(\xi_t) + \sum_{i= 0}^{L - 1} \ell^*_{i|t} \nonumber\\
    &~~~ + 2 \lambda_{\max}(Q) \| (\bar{y}^*_{i|t} + h^*_{i|t}) - y_{t + i}\|^2 + 2 \lambda_{\max}(Q) \|  h^*_{i|t} \|^2 \nonumber\\
    & \le Y_{L}(\xi_t) + \underbrace{2 \lambda_{\max}(Q) (Lp \bar{d}^2 + \bar{Y} \bar{d}/ \lambda_h)}_{ \triangleq \alpha_1(\bar{d})} \label{eq:alpha_1}
\end{align}
where the last inequality follows from the fact that $\sum_{i = 0}^{L - 1} \|h^*_{i|t}\|^2 \le (\bar{d}/\lambda_h) J^*(\xi_t) \le (\bar{d}/\lambda_h)\bar{Y}$ and $\bar{y}^*_{[0,L-1]|t} + h^*_{[0,L-1]|t} = Y_F g^*_t$ is a trajectory of the kernel-based estimator \eqref{eq:kernel-estimator}, and hence the estimation error satisfies Assumption \ref{as:bound-kernel}.
Noting that $Y_{L}(\xi_t) \le \bar{Y}$, there exists at least one $i_{\xi} \in \{1,\cdots,L - 1\}$ such that
\begin{align}
    \|\xi_{i_{\xi} + t}\|^2  &\le \smat{\frac{\min\{ \bar{Y} + \alpha_1(\bar{d}),\gamma_{\bar{Y}} \|\xi_t\|^2 + \alpha_1(\bar{d}) + \hat{\alpha}_s(\bar{d})\}}{\epsilon_o(L - 1)}}  \label{eq:xi-mid}\\
    &\le \frac{\bar{Y} + \theta }{\epsilon_o(L - 1)} \nonumber
\end{align}
where $\theta > 0$ is an arbitrary but fixed constant satisfying $\theta \ge \alpha_1(\bar{d}) + \hat{\alpha}_s(\bar{d})$.
Hence, if $L > L_0 := 1 + \frac{\bar{Y}}{\delta_s \epsilon_o}$, then there exists a small enough constant $\theta > 0$ such that \eqref{eq:xi-mid} ensures $\|\xi_{t + i_\xi}\|^2 \le \delta_s$ with $\delta_s$ from Lemma \ref{lem:local-cost-bound}.
Building on this result, for $i \in [0,i_\xi - 2]$, let $\bar{u}_{i|t + 1} := \bar{u}^*_{i + 1|t}$, and for $i \in [i_\xi - 1, L - 1]$, let $\bar{u}_{i|t + 1} := \kappa(\xi_{_{t + i + 1}})$.
For $i \in [0,L - 1]$, let $\bar{y}_{i|t + 1} := y_{t + i + 1}$ and $h_{[0,L - 1]|t + 1} = y^{ker}_{[t + 1, t + L]} - y_{[t + 1, t + L]}$ where $y^{ker}_{[t + 1, t + L]}$ is generated from \eqref{eq:kernel-estimator} with $u_{ini} = u_{[t - \eta + 1, t]}$, $y_{ini} = y_{[t - \eta + 1, t]}$, and $u_{[1:mL]} = \bar{u}_{[0,L - 1]|t + 1}$.
Let $g_{t + 1} := (\bar{\textbf{K}} + \gamma I)^{-1} k(u_{[t - \eta + 1,t]}, y_{[t - \eta + 1, t]}, \bar{u}_{[0, L -1]|t + 1})$.
It can be observed from \eqref{eq:kernel-error-xi} that $(\bar{u}_{[0,L - 1]|t + 1},\bar{u}_{[0,L - 1]|t + 1},g_{t+1},h_{t+1})$ is a candidate solution for Problem \eqref{eq:ddkmpc-ideal}.

\emph{Step III.}
In the following, we derive a bound on the value function $J_L^*(\xi_{t + 1})$ using the candidate solution defined in the Step II above.
Building on the construction of $(\bar{u}_{[0,L - 1]|t + 1}$, $\bar{u}_{[0,L - 1]|t + 1}$, $g_{t+1}$, $h_{t+1})$, we have that
\begin{align}
      &J^*_L(\xi_{t + 1})  \nonumber\\
      &\le \sum_{i = 0}^{L - 1} \|\bar{u}_{i|t + 1}\|^2_{R} + \|{y}_{t + i + 1}\|^2_Q  + \lambda_h/\bar{d} \|h_{t + 1}\|^2 + \lambda_g \bar{d} \|g_{t + 1}\|^2 \nonumber\\
   &\le  -  \ell^*_{0|t} + \sum_{i = 0}^{i_\xi - 2} \ell^*_{i|t}  + \|y_{t + i+ 1}\|^2_Q - \| \bar{y}^*_{i+1|t} \|^2_Q  + \lambda_g \bar{d} \|g_{t + 1}\|^2 \nonumber\\
   &~~~ + \sum_{i = i_{\xi}-1}^{L - 1} \|\bar{u}_{i|t + 1}\|^2_R + \|y_{t + i + 1}\|^2_Q + \lambda_h/\bar{d} \|h_{t + 1}\|^2  \nonumber\\
   & \le -  \ell^*_{0|t} + J^*_{L}(\xi_t) + \underbrace{\sum_{i = 0}^{i_{\xi}-2}\|y_{t + i+ 1}\|^2_Q - \| \bar{y}^*_{i+1|t} \|^2_Q }_{\triangleq \textbf{I}_1} \nonumber\\
   &~~~ + \underbrace{ \sum_{i = i_{\xi}-1}^{L - 1} \|\bar{u}_{i|t + 1}\|^2_R + \|y_{t + i + 1}\|^2_Q}_{\triangleq \textbf{I}_2} \nonumber\\
   &~~~ + \underbrace{\lambda_h/\bar{d} \|h_{t + 1}\|^2 + \lambda_g \bar{d} \|g_{t + 1}\|^2}_{\triangleq \textbf{I}_3}
   \label{eq:next-J-decrease-1}.
\end{align}

Next, we derive upper bounds for $\mathbf I_1$, $\mathbf I_2$, and $\mathbf I_3$ separately.
Let $e_i:=y_{t+i+1}-\bar y^*_{i+1|t}$.
For $\mathbf I_1$, it follows that
\begin{align*}
\mathbf I_1
&=
\sum_{i=0}^{i_\xi-2}
\Big(
\|y_{t+i+1}\|_Q^2-\|\bar y^*_{i+1|t}\|_Q^2
\Big)\\
&=
\sum_{i=0}^{i_\xi-2}
\Big(
\|e_i\|_Q^2
+
2e_i^\top Q\bar y^*_{i+1|t}
\Big)\\
&\le
\sum_{i=0}^{i_\xi-2}\|e_i\|_Q^2
+
2
\Big(\sum_{i=0}^{i_\xi-2}\|e_i\|_Q^2\Big)^{1/2}
\Big(\sum_{i=0}^{i_\xi-2}\|\bar y^*_{i+1|t}\|_Q^2\Big)^{1/2}.
\end{align*}
Moreover, since $e_i
=
y_{t+i+1}
-
(\bar y^*_{i+1|t}+h^*_{i+1|t})
+
h^*_{i+1|t}$,
the kernel estimation error bound and $\sum_{i=0}^{L-1}\|h^*_{i|t}\|^2
\le
\frac{\bar d}{\lambda_h}J_L^*(\xi_t)
\le
\frac{\bar Y\bar d}{\lambda_h}$
imply $\sum_{i=0}^{i_\xi-2}\|e_i\|_Q^2
\le
2\lambda_{\max}(Q)
(
(i_\xi-1)p\bar d^2
+
\frac{\bar Y\bar d}{\lambda_h}
)$.
Since $\sum_{i=0}^{i_\xi-2}\|\bar y^*_{i+1|t}\|_Q^2
\le
\sum_{i=0}^{L-1}\|\bar y^*_{i|t}\|_Q^2
\le
Y_L(\xi_t)
\le
\bar Y$, we obtain $\mathbf I_1
\le
E_1(\bar d)+2\sqrt{\bar Y E_1(\bar d)}
=:\alpha_2(\bar d) \in\mathcal K_\infty$, where $E_1(\bar d):=
2\lambda_{\max}(Q)
(
(i_\xi-1)p\bar d^2
+
\frac{\bar Y\bar d}{\lambda_h}
)$.

In addition, 
it follows from \eqref{eq:f_ker+d} and \eqref{eq:Vs-property:decrease} that there exist a constant $c_0 > 0$ and a function $\hat{\alpha}_s(\cdot) \in \mathcal{K}_{\infty}$ such that
\begin{align}\label{eq:local-xi-decrease}
    \|\bar{\xi}_{i|t}\|^2 \le c_0 \rho_s^i \|\xi_t\|^2.
\end{align}
Therefore,
\begin{align*}
    \textbf{I}_2 &\le \lambda_{\max}(R,Q) \sum_{i = i_{\xi} - 1}^{L - 1} \|\xi_{t + i + 1}\|^2\\
    & \overset{\eqref{eq:local-xi-decrease}}{\le} \frac{\lambda_{\max}(R,Q) c_0}{1 - \rho_s} \|\xi_{t + i_\xi}\|^2 + (L - i_{\xi})\alpha_s(\bar{d})\\
    & \overset{\eqref{eq:xi-mid}}{\le } \frac{\lambda_{\max}(R,Q) c_0 \gamma_{\bar{Y}}}{\epsilon_o (L - 1)(1 - \rho_s)} \|\xi_t\|^2 \\
    &~~~+ \underbrace{\frac{\lambda_{\max}(R,Q) c_0 }{\epsilon_o (L - 1)(1 - \rho_s)}(\alpha_1(\bar{d}) + \hat{\alpha}_s(\bar{d})) + (L - i_{\xi})\alpha_s(\bar{d})}_{\triangleq \alpha_3(\bar{d})}.
\end{align*}
Furthermore, similar to the proof of Lemma \ref{lem:local-cost-bound}, we have that $\textbf{I}_3 {\le} (\lambda_h L p^2 + \lambda_g c_g )\bar{d}$.
Therefore, \eqref{eq:next-J-decrease-1} becomes
\begin{align}\label{eq:next-J-decrease}
    J^*_L(\xi_{t + 1}) &\le -\ell^*_{0|t} + J^*_{L}(\xi_t) + \frac{\lambda_{\max}(R,Q) c_0 \gamma_{\bar{Y}}}{\epsilon_o (L - 1)(1 - \rho_s)} \|\xi_t\|^2 \nonumber\\
    &~~~+ \underbrace{\alpha_2(\bar{d}) + \alpha_3(\bar{d}) + (\lambda_h L p^2 + \lambda_g c_g )\bar{d}}_{\triangleq \alpha_4(\bar{d})}.
\end{align}

On the other hand, note from \eqref{eq:Vo-property:decrease} that 
\begin{align*}
    &V_o(\xi_{t + 1}) - V_o(\xi_t) \\
    &\le - \epsilon_o \|\xi_t\|^2 + \|\bar{u}^*_{0|t}\|^2_R + \frac{1}{2}\|{y}_{t} - \bar{y}^*_{0|t} + \bar{y}^*_{0|t}\|^2_Q  \\
    &\le - \epsilon_o \|\xi_t\|^2 + \ell^*_{0|t} + \lambda_{\max}(Q) p^2 \bar{d}^2.
\end{align*}
Combining this inequality with \eqref{eq:next-J-decrease} yields
\begin{align*}
    &Y_{L}(\xi_{t + 1}) \\
    &\le J_L(\bar{\xi}_{t + 1}) \!+ V_o(\xi_t) - \epsilon_{o}\|\xi_t\|^2 \!+ \ell^*_{0|t} + \lambda_{\max}(Q)p^2 \bar{d}^2\\
    &\overset{\eqref{eq:next-J-decrease}}{\le}J^*_L(\xi_t) \!+ V_o(\xi_t) \!- \epsilon_o\underbrace{\Big(1 \!-\! \frac{\lambda_{\max}(R,Q) c_0 \gamma_{\bar{Y}}}{\epsilon_o^2 (L \!-\! 1)(1 \!-\! \rho_s)}\Big)}_{\triangleq a_L} \|\xi_t\|^2 \\
    &~~~+ \underbrace{\alpha_4(\bar{d}) + \lambda_{\max}(Q) p^2 \bar{d}^2}_{\triangleq \alpha_Y(\bar{d})} \\
    & \le Y_{L}(\xi_t) - a_L \epsilon_o\|\xi_t\|^2 + \alpha_Y(\bar{d}).
\end{align*}
To ensure that $Y_L$ is a practical Lyapunov function, we need $a_L >0$, which holds for a sufficiently long horizon $L$, i.e., $L > L_1:= 1 + \frac{\lambda_{\max}(R,Q) c_0 \gamma_{\bar{Y}}}{\epsilon_o^2 (1 - \rho_s)}$.
Moreover, to show recursive feasibility, we need that $Y_L(\xi_t) \le \bar{Y}$ holds for all $t >0$.
Based on \eqref{eq:YL-property:decrease}, there exists a small enough constant $\bar{d}_0 \le \min\{ \alpha_Y^{-1}(\bar{Y}a_L) , (\alpha_1 + \hat{\alpha}_s)^{-1}(\theta)\}$ such that $Y_L \le \bar{Y}$ holds for all $\bar{d} \le \bar{d}_0$.
In summary, the prediction horizon length $L$ must be such that 
\begin{equation}
    L > \max\{L_0,L_1\} := L_{\bar{Y}}
\end{equation}
and the interpolation error bound needs to satisfy $\bar{d} \le \bar{d}_0$.
This completes the proof.
\end{proof}

\section{Discussions
}
\label{sec:discussion}
Previous sections established the DDKPC framework under noise-free data and provided closed-loop stability guarantees. 
However, practical online implementation of Algorithm~\ref{alg:ddkmpc-ideal} faces two major challenges: 
i) the nonlinear and nonconvex structure of the DDKPC problem (cf.~\eqref{eq:ddkmpc-ideal:ker}) results in significant computational burden, and 
ii) measurement noise is unavoidable during both offline data collection and online operation. 
In the following, we address these issues separately.

\subsection{Computational feasibility via relaxation}
\label{sec:discussion-computation}

The optimization problem~\eqref{eq:ddkmpc-ideal} is inherently nonconvex due to the implicit kernel-based equality constraint~\eqref{eq:ddkmpc-ideal:ker}. 
Direct application of general-purpose nonlinear programming (NLP) solvers may therefore lead to prohibitive computational cost, limiting real-time applicability.

To enhance computational efficiency, the hard constraint in~\eqref{eq:ddkmpc-ideal:ker} can be incorporated into the objective function using quadratic penalty terms, as adopted in~\cite{huang2023robust}. 
Specifically, by introducing a penalized objective similar to~\cite[eq.~(29)]{huang2023robust}, Problem~\eqref{eq:ddkmpc-ideal} can be reformulated as $\min_{\overline{u} \in \mathcal{U},\, g_t} 
J_{\mathrm{relaxed}}(\overline{u}, g_t)
=
\sum_{i=0}^{L-1} 
\ell(\overline{u}_{i|t}, Y_F g_t)
+ \lambda_g \|g_t\|^2
+ \lambda_k 
\|(\bar{\mathbf K}+\gamma I)g_t 
- \mathbf k(u_{ini}, y_{ini}, \overline{u}_{[0,L-1]|t})\|^2$,
where $\lambda_k > 0$ is the penalty parameter for the kernel constraint violation.
The resulting problem admits efficient numerical treatment using gradient-based or first-order optimization methods, such as those developed in~\cite[Section~III]{wang2025optimization}, thereby reducing the computational burden compared with general NLP solvers.

It is worth noting that due to the nonconvex nature of the penalized objective (i.e., $\|(\bar{\mathbf K}+\gamma I)g_t 
- \mathbf k(u_{ini}, y_{ini}, \overline{u}_{[0,L-1]|t})\|^2$), convergence to a global minimizer cannot be guaranteed in general. 
Specifically, one must assume that the numerical solver is capable of
finding a suboptimal solution $(\bar u,\bar g_t)$ such that $J_{\rm relaxed}(\bar u,\bar g_t)-J_{\rm relaxed}^*(\xi_t)\le \delta_J$.
Since the implicit kernel constraint is enforced through a penalty term,
we further require the associated residual to be sufficiently small, i.e., $\|(\bar{\mathbf K}+\gamma I)\bar g_t
-\mathbf k(u_{ini},y_{ini},\bar u_{[0,L-1]|t})\|^2
\le \delta_K$.
Here $\delta_J,\delta_K>0$ are problem-dependent tolerances determined by
the available Lyapunov-decrease margin. The objective suboptimality
$\delta_J$ and the residual tolerance $\delta_K$ appear as additional
additive terms in \eqref{eq:YL-property:decrease} and can be absorbed into
the resulting practical bound by choosing them sufficiently small.
Such requirements are practically attainable by choosing a sufficiently
large penalty weight $\lambda_k$ and a sufficiently accurate solver
tolerance.
For instance, the methods proposed in
\cite{wang2025optimization} have demonstrated fast convergence, reaching a
numerical precision of less than $10^{-5}$ in a few iterations.

\subsection{Robust DDKPC under noisy data}
\label{sec:discussion-noise}

While Section \ref{sec:ddkmp:noise-free} establishes theoretical guarantees under noise-free conditions, practical implementations must account for measurement noise during both offline data collection and online operation. Suppose the exact output $y_t = h_0(x_t)$ is corrupted by additive noise $n_t$, yielding the available measurements $\tilde{y}_t = h_0(x_t) + n_t$.

\begin{assumption} \label{as:noise}
    The measurement noise $n_t$ is uniformly bounded such that $\|n_t\| \le \bar{n}$ for all $t \in \mathbb{N}$, where $\bar{n} > 0$ is a known constant.
\end{assumption}

To maintain robustness, the offline data matrices are constructed using the noise-corrupted data, denoted as $\tilde{\textbf{K}}$ and $\tilde{Y}_F$. Furthermore, we aggregate the kernel approximation error and the measurement noise into a unified uncertainty bound $\bar{w} := \bar{n} + \bar{d}$. To compensate for the initial condition mismatch induced by $\tilde{y}_t$, the nominal formulation \eqref{eq:ddkmpc-ideal} is robustified as:
\begin{subequations}\label{eq:ddkmpc-noise}
    \begin{align}
       J_L(\tilde{\xi}_t) := \label{eq:ddkmpc-noise:obj}\\
       \min_{g,\bar{u},\bar{y},h} ~~&\sum_{i = 0}^{L - 1} \ell(\bar{u}_{i|t}, \bar{y}_{i|t}) + \frac{\lambda_h}{\bar{w}}\|h_t\|^2 + \lambda_g \bar{w}\|g_t\|^2 \nonumber\\
         {\rm s.t.}
~~& \bmat{
       \tilde{\textbf{K}} + \gamma I\\
        \tilde{Y}_F
        } g_t = \bmat{
       \textbf{k}(u_{ini},y_{ini},\bar{u}_{[0,L - 1]|t})\\
        \bar{y}_{[0,L - 1]|t} + h_{[0,L - 1]|t}
        },
\label{eq:ddkmpc-noise:ker}\\
        &\bmat{
        u_{ini}\\
        y_{ini} + h_{[-\eta, -1]|t}
        } = \bmat{
        u_{[t - \eta,t - 1]}\\
        \tilde{y}_{[t - \eta,t - 1]}
       },\label{eq:ddkmpc-noise:ini}\\
        & \bar{u}_{i|t} \in \mathcal{U},~~\forall i \in [0,L - 1],\label{eq:ddkmpc-noise:h-u-constraint}
    \end{align}  
\end{subequations}
where the slack variable vector is augmented as $h_t := \smat{h_{[-\eta, -1]|t}\\ h_{[0, L - 1]|t}}$ to penalize both the prediction mismatch and the initial condition discrepancy.

Compared to the nominal formulation \eqref{eq:ddkmpc-ideal}, Problem \eqref{eq:ddkmpc-noise} rescales the regularization weights via the unified bound $\bar{w}$ to absorb the measurement noise. Since the kernel representation error bound \eqref{eq:kernel-error} also holds for noise-corrupted data under regularized interpolation (cf. \cite{lahr2025kernelbound}), the closed-loop stability guarantees can be extended to this robustified framework.

\begin{theorem}[Robust stability under noisy data]
\label{thm:stab-noise}
    Let Assumptions \ref{as:io-equiv}--\ref{as:noise} hold. 
    Then, for any constant $\tilde{Y} >0$, there exist constants $L_{\tilde{Y}}, \gamma_{\tilde{Y}}, \bar{w}_0 >0$ such that for $Y_L(\xi_0) \le \tilde{Y}$, $L > L_{\tilde{Y}}$, and $\bar{w} \le \bar{w}_0$, Problem \eqref{eq:ddkmpc-noise} is feasible for all times $t >0$. Furthermore, the function $Y_L$ defined in \eqref{eq:Lyp-func} fulfills:
   \begin{subequations}\label{eq:YL-property-noise}
        \begin{align}
            \epsilon_o \|\tilde{\xi}_t\|^2 \le Y_{L}(\tilde{\xi}_t) &\le \gamma_{\tilde{Y}} \|\tilde{\xi}_t\|^2 + \hat{\alpha}_s(\bar{w}) \label{eq:YL-property-noise:l-u-bound}\\
           Y_{L}(\tilde{\xi}_{t + 1}) - Y_{L}(\tilde{\xi}_t) &\le -a_L \epsilon_o \|\tilde{\xi}_t\|^2 + \alpha_Y(\bar{w}). \label{eq:YL-property-noise:decrease}
 \end{align}
    \end{subequations}
\end{theorem}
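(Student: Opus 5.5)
The plan is to replay the three-step argument of Theorem~\ref{thm:stab-ideal}, replacing $\bar{d}$ by the unified bound $\bar{w}=\bar{n}+\bar{d}$ and $\xi_t$ by the measured extended state $\tilde{\xi}_t$ built from $(u_{[t-\eta,t-1]},\tilde{y}_{[t-\eta,t-1]})$. Two preliminary facts are needed before the steps can be rerun.

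\emph{Uniform error bound.} First I would establish a uniform mismatch bound between the noisy kernel predictor \eqref{eq:ddkmpc-noise:ker} and the measured outputs. For any admissible input sequence, the true output $y$ differs from $\tilde{Y}_F(\tilde{\mathbf K}+\gamma I)^{-1}\mathbf{k}(\cdot)$ by at most $\bar d$ per component, using the regularized bound \eqref{eq:kernel-error} for noisy data from \cite{lahr2025kernelbound}. The measured output $\tilde y$ differs from $y$ by at most $\bar n$ by Assumption~\ref{as:noise}. Hence each measured-versus-predicted component error is bounded by $\bar w$. This is the analogue of \eqref{eq:kernel-error-xi} with $\bar d\to\bar w$.

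\emph{Local cost bound.} Next I would re-derive Lemma~\ref{lem:local-cost-bound} for Problem~\eqref{eq:ddkmpc-noise}. Take the candidate $\bar u_{i|t}=\kappa(\cdot)$ along the nominal local trajectory. Set $y_{ini}=y_{[t-\eta,t-1]}$ (the true outputs), so that $h_{[-\eta,-1]|t}=\tilde y-y$ with $\|h_{[-\eta,-1]|t}\|^2\le p\eta\bar n^2$. Take $g_t=(\tilde{\mathbf K}+\gamma I)^{-1}\mathbf k$, which gives $\|g_t\|^2\le c_g$ via Assumption~\ref{as:bound-kernel} and $\gamma>0$. Finally let $h_{[0,L-1]|t}$ be the prediction mismatch. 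The penalties $\frac{\lambda_h}{\bar w}\|h_t\|^2\le \lambda_h(L+\eta)p^2\bar w$ and $\lambda_g\bar w\|g_t\|^2$ are then class-$\mathcal K_\infty$ in $\bar w$. Assumption~\ref{as:stab}, with $d\in\mathcal I$ absorbing both noise and mismatch, yields $J_L^*(\tilde\xi_t)\le\gamma_s\|\tilde\xi_t\|^2+\hat\alpha_s(\bar w)$ on $\|\tilde\xi_t\|^2<\delta_s$. This argument requires $\bar w_0$ small enough that these disturbances lie in $\mathcal I$. Step~I, the sandwich bounds \eqref{eq:YL-property-noise:l-u-bound}, then follows exactly as before with $\gamma_{\tilde Y}=\max\{\gamma_s+c_{o,u},\tilde Y/\delta_s\}$.

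\emph{Steps II and III.} For Step~II, I would sum the UIOSS inequality \eqref{eq:Vo-property:decrease} along the true trajectory under $\bar u^*_{\cdot|t}$. The key change is that the gap $\|\bar y^*_{i|t}-y_{t+i}\|$ now splits into three parts: the slack $h^*$, bounded through $\sum\|h^*\|^2\le\bar w\tilde Y/\lambda_h$; the kernel error $\bar d$; and the noise $\bar n$. This gives an $\alpha_1(\bar w)$ analogous to \eqref{eq:alpha_1}, and the pigeonhole argument then produces $i_\xi$ with $\|\xi_{t+i_\xi}\|^2\le\delta_s$ for $L>L_0$. The candidate at $t+1$ is built as before: shifted optimal inputs, then $\kappa$ on the tail. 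The output is set to $\bar y_{i|t+1}=y_{t+i+1}$, and the new initial slack is $h_{[-\eta,-1]|t+1}=\tilde y_{[t-\eta+1,t]}-y_{[t-\eta+1,t]}$. Step~III bounds $\mathbf I_1,\mathbf I_2,\mathbf I_3$ as in the nominal proof, with each $\bar d$-function replaced by a $\bar w$-function; $\mathbf I_3$ acquires an extra $\lambda_h\eta p^2\bar w$ term from the initial slack. The $V_o$ one-step decrease is then combined to obtain \eqref{eq:YL-property-noise:decrease} with the same $a_L$ once $L>L_1$. Recursive feasibility follows by choosing $\bar w_0\le\min\{\alpha_Y^{-1}(\tilde Y a_L),(\alpha_1+\hat\alpha_s)^{-1}(\theta)\}$, so that $Y_L\le\tilde Y$ propagates.

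\emph{Main obstacle.} I expect the hardest point to be the mismatch between the measured state $\tilde\xi_t$, which is what the optimizer sees, and the true state $\xi_t$, on which Assumptions~\ref{as:stab}--\ref{as:observ} act. Since $\|\tilde\xi_t-\xi_t\|\le\sqrt{p\eta}\,\bar n$, I would plan to use $\|\xi_t\|^2\ge\frac12\|\tilde\xi_t\|^2-p\eta\bar n^2$ together with continuity of $V_o$ in the form $V_o(\tilde\xi)\le V_o(\xi)+\mathcal K_\infty(\bar n)$ on bounded sets. This lets every conversion between $\xi$ and $\tilde\xi$ be pushed into the additive term $\alpha_Y(\bar w)$. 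The cost is possibly halving $\epsilon_o$, or equivalently enlarging $L_{\tilde Y}$, so that $a_L$ keeps the same form after renaming constants.
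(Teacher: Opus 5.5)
Your overall route is the paper's: rerun Steps I--III of Theorem~\ref{thm:stab-ideal} with $\bar d$ replaced by $\bar w$. At $t+1$, use the candidate whose $y_{ini}$ is the true past output window and whose initial slack is the measurement noise. Your sign, $h_{[-\eta,-1]|t+1}=\tilde y_{[t-\eta+1,t]}-y_{[t-\eta+1,t]}$, is the one consistent with \eqref{eq:ddkmpc-noise:ini}; the paper writes $-n$. Two parts of your plan go beyond the paper's sketch, which simply substitutes $\bar w$ for $\bar d$: your re-derivation of Lemma~\ref{lem:local-cost-bound} for Problem~\eqref{eq:ddkmpc-noise}, and your conversion between $\tilde\xi_t$ and $\xi_t$ via continuity of $V_o$.

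One step does not hold as written, and the paper's sketch skips it too. This is the claim in Step~II that the gap $\|\bar y^*_{i|t}-y_{t+i}\|$ splits into the slack, the kernel error $\bar d$, and ``the noise $\bar n$''.

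No measured future output enters the cost, the UIOSS sum, or $\mathbf I_1$, so your ``measured-versus-predicted'' bound is not the one the analysis uses. Noise enters only through the initial window, and there the optimizer is free. At time $t$ it selects $y^*_{ini}=\tilde y_{[t-\eta,t-1]}-h^*_{[-\eta,-1]|t}$. Hence $\bar y^*+h^*_{[0,L-1]|t}=\tilde Y_F g_t^*$ is the kernel prediction at $\zeta^*:={\rm col}(u_{[t-\eta,t-1]},y^*_{ini},\bar u^*)$. The true outputs, by contrast, are $q(\zeta_{\rm true})$, with $\zeta_{\rm true}$ built from the true $y_{[t-\eta,t-1]}$.

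The kernel bound only gives $|y^{ker}_{[a]}(\zeta^*)-q_{[a]}(\zeta^*)|\le\bar d$. You still need to control $|q_{[a]}(\zeta^*)-q_{[a]}(\zeta_{\rm true})|$, where $\|\zeta^*-\zeta_{\rm true}\|=\|n_{[t-\eta,t-1]}-h^*_{[-\eta,-1]|t}\|\le\sqrt{\eta}\,\bar n+\sqrt{\bar w\tilde Y/\lambda_h}$. This perturbation is propagated through the nonlinear multi-step map, so it is not additively bounded by $\bar n$.

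The missing ingredient is a continuity estimate for $q_{[a]}$, which Assumption~\ref{as:kernel-representation} supplies through the reproducing property:
$|q_{[a]}(\zeta)-q_{[a]}(\zeta')|\le\Gamma\sqrt{K(\zeta,\zeta)-2K(\zeta,\zeta')+K(\zeta',\zeta')}$.
This is a uniform modulus of continuity on the compact set in which $\zeta$ ranges along $\{Y_L\le\tilde Y\}$ (which needs $\mathcal U$ compact); for the Gaussian kernel it is even globally Lipschitz. Add this term in three places: to $\alpha_1$, to $E_1$ in the bound on $\mathbf I_1$, and to the one-step $V_o$ estimate through $\|y_t-\bar y^*_{0|t}\|$. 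Each stays class-$\mathcal K_\infty$ in $\bar w$ (now with a $\sqrt{\bar w}$ contribution), and the rest of your argument then goes through.
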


\begin{proof}
    The proof follows a similar step as established in Theorem \ref{thm:stab-ideal}. The key modification lies in accommodating the initialization mismatch \eqref{eq:ddkmpc-noise:ini}. Given a feasible solution at time $t$, we construct the candidate solution at time $t+1$ identically to Step II of Theorem \ref{thm:stab-ideal}, with the additional assignment of the initial slack candidate as $h_{[-\eta, -1]|t+1} = -n_{[t - \eta + 1, t]}$. 
    
    Under Assumption \ref{as:noise}, the norm of this candidate slack is strictly bounded by $\sqrt{\eta}\bar{n} \le \sqrt{\eta}\bar{w}$. Consequently, the penalty term $\frac{\lambda_h}{\bar{w}}\|h_t\|^2$ remains proportionately bounded by $\bar{w}$. By aggregating the measurement noise and the approximation error into the unified uncertainty bound $\bar{w}$, the upper bound derivations for the candidate cost scale symmetrically to those in Lemma \ref{lem:local-cost-bound}. Substituting $\bar{d}$ with $\bar{w}$ throughout Step III of Theorem \ref{thm:stab-ideal} directly yields the practical Lyapunov decrease inequality \eqref{eq:YL-property-noise:decrease}, completing the proof.
\end{proof}

\section{Online DDKPC for Slowly Time-Varying Nonlinear Systems}
\label{sec:online-ddkpc}
It can be observed from Algorithm \ref{alg:ddkmpc-ideal} that the dataset $\mathcal{D}$ used to construct the matrices $Y_F$ and $\bar{\mathbf{K}}$ is fixed offline and is not updated during closed-loop operation. 
Such a static design is appropriate when the effective multi-step input-output mapping seen by the controller remains approximately stationary over the operating regime of interest. 
In many practical applications, however, this mapping may drift over time due to e.g., changing operating conditions, actuator aging, payload variation, tire-road changes, or other unmodeled effects. 
As a result, a predictor constructed from a fixed offline dictionary may gradually lose relevance along the evolving closed-loop trajectory. 
This motivates the development of an online extension of the proposed DDKPC framework followed by closed-loop stability guarantees. 

For simplicity, we focus only on the noise-free case. 
The extension to noisy online measurements follows the same robustification principle as in Section \ref{sec:discussion-noise}.

\subsection{Time-varying nonlinear systems}

Consider a time-varying nonlinear system as follows
\begin{subequations}
\label{eq:sys-tv}
\begin{align}
x_{t+1} &= f_t^0(x_t,u_t), \label{eq:tv-sys:x}\\
y_t &= b_t^0(x_t), \label{eq:tv-sys:y}
\end{align}
\end{subequations}
where functions $f_t^0(\cdot,\cdot)$ and $b_t^0(\cdot)$ are unknown and may vary with time.
Similar to Assumption \ref{as:io-equiv}, the following assumption is imposed to characterize the input-output behavior of \eqref{eq:sys-tv}.

\begin{assumption}
\label{as:tv-io}
There exist a known constant $\eta \in \mathbb{N}_{+}$, sets $\bar{\Xi} \subset \mathbb{R}^{(m+p)\eta}$ and $\bar{\mathcal{U}}\subset\mathbb{R}^{m}$ such that for all $(\xi_t,u_t)\in \bar{\Xi} \times \bar{\mathcal{U}}$, the extended state $\xi_t$ defined in \eqref{eq:xi} satisfies
\begin{subequations}
\label{eq:sys-tv-xi}
\begin{align}
\xi_{t+1} &= f_t(\xi_t,u_t), \label{eq:sys-tv-xi:xi}\\
y_t &= b_t(\xi_t), \label{eq:sys-tv-xi:y}
\end{align}
\end{subequations}
and reproduces the same input-output behavior as \eqref{eq:sys-tv}.
\end{assumption}

Building on Assumption \ref{as:tv-io}, the time-varying version of the multi-step predictor in \eqref{eq:multi-step-predictor} is given by
\begin{equation}
\label{eq:tv-predictor}
\bar{y}_{t,[a]}(\zeta) = q_{t,[a]}(\zeta).
\end{equation}

In the following, we focus our attention on a class of slow time-varying system, i.e., we assume that functions $f_t(\cdot,\cdot)$ and $b_t(\cdot)$ have a bounded rate of variation, which is formally imposed in the following assumption.

\begin{assumption}
\label{as:tv-bound}
There exist $L_f,L_b\ge 0$ such that for all $t_1,t_2\in\mathbb{N}$, $\xi\in\bar{\Xi}$, $u_{t_2+i_0}\in\bar{\mathcal{U}}$ with $i_0 \in [0,i - 1]$, and $i\in[0,L-1]$, the following inequalities hold:
\begin{subequations} 
\label{eq:tv-bound}
\begin{align}
&\left\|
\Phi_{t_1,i}^{t_1}\big(\xi,u_{[t_2,t_2+i-1]}\big)
-
\Phi_{t_2,i}\big(\xi,u_{[t_2,t_2+i-1]}\big)
\right\|
\nonumber\\
&\hspace{3em}\le
L_f i\big(|t_1-t_2|+i\big)\|\xi\|,
\label{eq:tv-bound:f}\\
&\left\|
b_{t_1}\circ
\Phi_{t_1,i}^{t_1}\big(\xi,u_{[t_2,t_2+i-1]}\big)
-
b_{t_2+i}\circ
\Phi_{t_2,i}\big(\xi,u_{[t_2,t_2+i-1]}\big)
\right\|
\nonumber\\
&\hspace{3em}\le
L_b (i+1)\big(|t_1-t_2|+i\big)\|\xi\|.
\label{eq:tv-bound:b}
\end{align}
\end{subequations}
Here, $\Phi_{t_1,0}^{t_1}(\xi,\cdot)=\Phi_{t_2,0}(\xi,\cdot)=\xi$, and for $i\ge 1$, $\Phi_{t_1,i}^{t_1}\big(\xi,u_{[t_2,t_2+i-1]}\big)
:=
f_{t_1}^{[i]}\big(\xi,u_{[t_2,t_2+i-1]}\big)$, $\Phi_{t_2,i}\big(\xi,u_{[t_2,t_2+i-1]}\big)
:=
f_{t_2+i-1}\circ f_{t_2+i-2}\circ\cdots\circ f_{t_2}
\big(\xi,u_{[t_2,t_2+i-1]}\big)$.
\end{assumption}

\begin{remark}
Assumption~\ref{as:tv-bound} imposes a bounded-rate condition directly
on the multi-step input-output evolution maps. It can be regarded as a
nonlinear counterpart of the Lipschitz regularity commonly imposed on
linear time-varying (LTV) matrix trajectories, e.g.,
$\|[A(t)-A(s)\;\;B(t)-B(s)]\|\le L|t-s|$ in
\cite[Assumption~1]{liu2023timevarying}, which similarly controls the
deviation between frozen and time-varying evolutions over a finite
horizon. This viewpoint is also consistent with averaging-based analyses
of nonlinear time-varying systems, where the behavior of the actual
time-varying system is related to that of an associated averaged or
frozen system; see, e.g., \cite{nesic2001input}.
\end{remark}

To capture the time-varying behavior of \eqref{eq:tv-predictor}, parameters $Y_F$, $\bar{\textbf{K}}$ and $\textbf{k}$ in Algorithm \ref{alg:ddkmpc-ideal} should be updated online.
To this end,  the following subsections are conducted by first introducing a data update mechanism, followed by a parameter updating rule and the online DDKPC.

\subsection{Online dictionary update via ALD}
Adapting standard online data-updating schemes from linear data-driven control to the kernel-based setting requires additional care. 
Unlike standard online DDPC \cite{berberich2021linearII}, where updating data matrices relies on a straightforward mechanism of appending and dropping columns, the extension to a kernel-based framework presents two primary challenges. 
First, incorporating new data necessitates the recalculation of the kernel Gram matrix $\bar{\mathbf{K}}$, which introduces a significant computational burden as the dataset grows. 
Second, a conventional First-In-First-Out (FIFO) replacement rule \cite{berberich2021linearII} may discard samples that remain informative for previously visited nonlinear operating regions, thereby degrading the global representational quality of the predictor. 
To address these issues while preserving the mathematical consistency and stability guarantees of the overall framework, we adopt a fixed-budget online dictionary management strategy based on the ALD criterion \cite{xu2007kernel}.

Let $\mathcal{D}_t^{\rm a}
= \{(\zeta_1,Y_{F,1}),\dots,(\zeta_{N_t},Y_{F,N_t})\}$
denote the active data dictionary at time $t$, with a bounded capacity
$N_t \le M_{\rm dict}$. During online operation, a candidate data point is
formed only after the corresponding length-$(\eta+L)$ input-output window
becomes available. Specifically, after the pair $(u_t,y_t)$ has been
recorded, the completed window anchored at $\tau=t-L+1$ is used to
construct $\zeta_{ new}
:=\mathrm{col}\big(u_{[\tau-\eta,\tau-1]},
y_{[\tau-\eta,\tau-1]},
u_{[\tau,\tau+L-1]}\big)$ and $
\qquad
Y_{F,{ new}}:=y_{[\tau,\tau+L-1]}$.
Then, its informational novelty is evaluated under the regularized model
by calculating the ALD criterion
\begin{equation} \label{eq:ald_condition}
\delta_t =
K(\zeta_{\rm new}, \zeta_{\rm new})
-\mathbf{k}_t(\zeta_{\rm new})^\top
({\bf{\bar{K}}}_t + \gamma I)^{-1}
\mathbf{k}_t(\zeta_{\rm new})
\end{equation}
where $\mathbf{k}_t(\zeta_{new}) = [K(\zeta_1^d, \zeta_{new}), \dots, K(\zeta_{N_t}^d, \zeta_{new})]^\top$. Specifically, let $\nu > 0$ be a pre-defined novelty threshold. If $\delta_t \le \nu$, the new data provides negligible information gain relative to the existing dictionary and the regularization level, and is thus discarded ($\mathcal{D}_{t+1}^a = \mathcal{D}_t^a$). 
Conversely, if $\delta_t > \nu$, $\zeta_{new}$ contains unmodeled dynamics and must be admitted. 
If the dictionary is unsaturated ($N_t < M_{\rm dict}$), $\zeta_{new}$ is simply appended. 
However, if the dictionary operates at its maximum budget ($N_t = M_{\rm dict}$), a pruning operation is triggered. 
To prevent severe deterioration of the prediction accuracy, we do not arbitrarily discard the oldest data, but seek to remove the most redundant element $\zeta_j \in \mathcal{D}_t^a$. 
This is achieved by identifying the element with the
smallest regularized conditional contribution to the span of the remaining dictionary elements:
\begin{equation}\label{eq:pruning_rule}
\iota=
\arg\max_{i\in[1,N_t]}
\left[(\bar{\mathbf K}_t+\gamma I)^{-1}\right]_{i,i}.
\end{equation}
Here, the $\iota$th element is removed, and $\zeta_{new}$ is inserted to form $\mathcal{D}_{t+1}^a$.

\begin{remark}
\label{rem:recency}
In practical implementations, the online dictionary update mechanism in \eqref{eq:ald_condition} and \eqref{eq:pruning_rule} can be augmented with redundancy safeguards by e.g., retaining a small quota of recent samples, imposing a bounded maximum sample age, or avoiding the immediate removal of newly admitted informative points.
Such engineering heuristics are not required for the theoretical development below, but may improve robustness of the online update mechanism in practice.
\end{remark}

\subsection{Online DDKPC and closed-loop guarantees}

Using the current dataset $\mathcal{D}_t^a$, the most direct way to formulate an online version of the kernel predictor in \eqref{eq:kernel-estimator} is to let $y^{\rm ker}_{t,[1:pL]}
=
Y_{F,t}(\bar{\mathbf{K}}_t + \gamma I)^{-1}\mathbf{k}_t(\zeta)$,
where $\bar{\mathbf{K}}_t$, $Y_{F,t}$ and $\mathbf{k}_t(\zeta)$ are constructed from the active dictionary $\mathcal{D}_t$.
Although updating these quantities at every sampling instant may improve the local prediction accuracy, it also changes the implicit predictor entering the DDKPC problem at every step, which may introduce undesirable fluctuations in the closed-loop optimization. 
To balance predictor adaptation and closed-loop consistency, we separate the online maintenance of the active dictionary from the periodic reconstruction of the predictor used in the DDKPC problem.
The active dictionary $\mathcal{D}_t^{\rm a}$ is maintained online by the ALD criterion: whenever a completed length-$(\eta+L)$ input-output window becomes available, it is tested by \eqref{eq:ald_condition} and either discarded or inserted into $\mathcal{D}_t^{\rm a}$ according to the fixed-budget rule in \eqref{eq:pruning_rule}. In contrast, the predictor used in the online DDKPC problem is reconstructed only periodically from a frozen snapshot of the active dictionary.

Specifically, given an updating period $T_0 \in \mathbb{N}_+$.
For $j \in \mathbb{N}$, let $s_j = j T_0$ denote the $j$th controller update time instant.
At $t = s_j$, $j \in \mathbb{N}$, freeze the current active dictionary $\mathcal{D}_{s_j}^c = \mathcal{D}_t^a$ and compute $\bar{\textbf{K}}_{s_j}$, $Y_{F,s_j}$, and $\textbf{k}_{s_j}$.
Given $\hat{d} > 0$, for any $t \in [s_j, s_{j + 1} - 1]$, the online DDKPC solves
\begin{subequations}
\label{eq:ddkmpc-online}
\begin{align}
\hat{J}_{L}&(\xi_t) :=  \label{eq:ddkmpc-online-obj}\\
& \min_{g_t,\bar{u},\bar{y},h_t}
\sum_{i=0}^{L-1}\ell(\bar{u}_{i|t},\bar{y}_{i|t})
+\frac{\lambda_h}{\hat{d}}\|h_t\|^2
+\lambda_g \hat{d}\|g_t\|^2 \nonumber
\\
{\rm s.t.}~~&
\bmat{\!
\bar{\mathbf{K}}_{s_j}\!+\!\gamma I\\
Y_{F,{s_j}}
}g_t
\!\!=\!\!
\bmat{
\mathbf{k}_{s_j}(u_{\rm ini},y_{\rm ini},\bar{u}_{[0,L-1]|t})\\
\bar{y}_{[0,L-1]|t}+h_{[0,L-1]|t}
},
\label{eq:ddkmpc-online-ker}\\
&
\bmat{
u_{\rm ini}\\
y_{\rm ini}
}
=
\bmat{
u_{[t-\eta,t-1]}\\
y_{[t-\eta,t-1]}
},
\label{eq:ddkmpc-online-ini}\\
&
\bar{u}_{i|t}\in\mathcal{U},\qquad \forall i\in[0,L-1].
\label{eq:ddkmpc-online-u}
\end{align}
\end{subequations}
Here, constraint \eqref{eq:ddkmpc-online-ker} provides an implicit form of the predictor
\begin{equation} 
\label{eq:kernel-estimator-tv}
    y^{\rm ker}_{t,[1:pL]}
=
Y_{F,s_{j}}(\bar{\mathbf{K}}_{s_{j}} + \gamma I)^{-1}\mathbf{k}_{s_{j}}(\zeta).
\end{equation}
To ensure the accuracy of the kernel-based estimator in the online setting, we modify Assumption \ref{as:bar_d} as follows.
\begin{assumption}
\label{as:est-error-bound-tv}
There exists a constant $\bar d_{on}\ge 0$ such that, for all $s_j = jT_0$, $j \in \mathbb{N}$, all $\zeta \in \bar{\Omega}$, and all $a\in[1,pL]$,
\begin{equation}
\label{eq:kernel_est_bound_on}
|y^{ker}_{s_{j}^{},[a]}(\zeta) - q_{s_{j}^{},[a]}(\zeta)|
\le d_{s_{j}^{}}(\zeta)\le \bar d_{ on}.
\end{equation}
\end{assumption}
Combining Assumptions \ref{as:tv-bound} and \ref{as:est-error-bound-tv}, we choose $\hat{d} \ge \bar{d}_{on} + L_f + L_b$.
The corresponding online execution procedure is summarized in Algorithm \ref{alg:ddkmpc-online}.

\begin{algorithm}[h]
    \caption{Online DDKPC.}
    \label{alg:ddkmpc-online}
    \begin{algorithmic}[1]
        \STATE \textbf{Input:} Initial active dictionary $\mathcal{D}^{\rm a}_0$, budget $M_{\rm dict}$, novelty threshold $\nu$, update period $T_0$, feasible set $\mathcal{U}$, prediction horizon $L$, matrices $Q$, $R$, and parameters $\lambda_h$, $\lambda_g$, $\hat d$, $\gamma$.
        \STATE \textbf{Initialize:} Given $s_0=0$, set $\mathcal{D}^{\rm c}_{s_0}=\mathcal{D}^{\rm a}_0$.
        \STATE Construct $\bar{\mathbf K}_{s_0}$ and $Y_{F,s_0}$ from $\mathcal{D}^{\rm c}_{s_0}$.
        \FOR{$t=0,1,2,\ldots$}
            \STATE Set $j=\lfloor t/T_0\rfloor$ and $s_j=jT_0$.
            \IF{$t=s_j$ and $t>0$}
                \STATE Freeze the current active dictionary $\mathcal{D}^{\rm c}_{s_j}=\mathcal{D}^{\rm a}_{t}$.
                \STATE Construct $\bar{\mathbf K}_{s_j}$ and $Y_{F,s_j}$ from $\mathcal{D}^{\rm c}_{s_j}$.
            \ENDIF
            \STATE Solve Problem \eqref{eq:ddkmpc-online} using $\bar{\mathbf K}_{s_j}$, $Y_{F,s_j}$, $\mathbf k_{s_j}(\cdot)$, and the most recent $\eta$ measurements $(u_{[t-\eta,t-1]},y_{[t-\eta,t-1]})$.
            \STATE Apply the input $u_t=\bar u^{*}_{0|t}$ to system \eqref{eq:sys-tv}.
            \STATE Record the new input-output pair and update the data buffer.
            \IF{a completed length-$(\eta+L)$ window is available}
                \STATE Form $(\zeta_{{ new}},Y_{F,{ new}})$.
                \STATE Compute $\delta_t$ from \eqref{eq:ald_condition} using $\mathcal{D}^{\rm a}_{t}$.
                \IF{$\delta_t > \nu$}
                    \IF{$|\mathcal{D}^{\rm a}_{t}| = M_{\rm dict}$}
                        \STATE Compute $\iota$ from \eqref{eq:pruning_rule} and remove $(\zeta_{\iota},Y_{F,\iota})$ from $\mathcal{D}^{\rm a}_{t}$.
                    \ENDIF
                    \STATE Insert $(\zeta_{{ new}},Y_{F,{ new}})$ into $\mathcal{D}^{\rm a}_{t}$ to obtain $\mathcal{D}^{\rm a}_{t+1}$.
                \ELSE   
                \STATE  Set $\mathcal{D}^{\rm a}_{t+1}=\mathcal{D}^{\rm a}_{t}$.
                \ENDIF
            \ELSE 
                \STATE Set $\mathcal{D}^{\rm a}_{t+1}=\mathcal{D}^{\rm a}_{t}$.
            \ENDIF
        \ENDFOR
    \end{algorithmic}
\end{algorithm}

The online optimization problem \eqref{eq:ddkmpc-online} is updated at
each instant $s_j$ through the reconstructed kernel predictor
$(\bar{\mathbf K}_{s_j},Y_{F,s_j},\mathbf k_{s_j})$. To analyze the
closed-loop behavior under these updates, we use the following uniform
versions of Assumptions~\ref{as:stab} and \ref{as:observ}.
\begin{assumption}[Uniform local stabilizability]
\label{as:stab-tv}
There exist a continuous Lyapunov function $V_{s,j}(\xi)$, a feedback law $\kappa_j(\xi)$, a compact set $\mathcal{I}$, constants $c_{s,l},c_{s,u},\delta^{loc}>0$, $\rho_s\in(0,1)$, and a function $\alpha_s\in\mathcal{K}_\infty$, all independent of $j$, such that, for all $s_j=jT_0$ with $\xi\in \mathbb{X}_{\delta,j}:=\{\xi\mid V_{s,j}(\xi)\le \delta^{loc}\}$ and $d\in\mathcal{I}$, it holds that $f_{s_j}(\xi,\kappa_j(\xi))+d\in\mathbb{X}_{\delta,j}$, and
\begin{subequations}
\label{eq:stab-tv}
\begin{align}
&c_{s,l}\|\xi\|^2
\le
V_{s,j}(\xi)
\le
c_{s,u}\|\xi\|^2,
\label{eq:stab-tv:bound}\\
&V_{s,j}\big(f_{s_j}(\xi,\kappa_j(\xi))+d\big)
\le
\rho_s V_{s,j}(\xi)+\alpha_s(\|d\|).
\label{eq:stab-tv:decrease}
\end{align}
\end{subequations}
\end{assumption}

\begin{assumption}[Common robust UIOSS]
\label{as:uioss-tv}
There exist a continuous function \(V_o\), constants
\(c_{o,l},c_{o,u},\epsilon_o>0\), a compact set $\mathcal{I}$ , and a function
\(\alpha_o\in\mathcal K_\infty\) such that, for all
\(s_j=jT_0\), \((\xi,u) \in \bar{\Xi} \times  \bar{\mathcal{U}}\), and \(d \in \mathcal{I}\), the following inequalities hold:
\begin{subequations}
\label{eq:uioss-tv}
    \begin{align}
        &c_{o,l}\|\xi\|^2\le V_o(\xi)\le c_{o,u}\|\xi\|^2, \label{eq:uioss-tv:bound}\\
        &V_o(f_{s_j}(\xi,u)+d)-V_o(\xi)
\le
-\epsilon_o\|\xi\|^2
+\frac{1}{2}\|b_t(\xi)\|_Q^2
+\|u\|_R^2 \label{eq:uioss-tv:decrease}.
    \end{align}
\end{subequations}
\end{assumption}

With these Assumptions in place, we first establish a local
feasibility and cost bound for the online problem.

\begin{lemma}
\label{lem:local-cost-bound-tv}
Under Assumptions \ref{as:tv-io}--\ref{as:uioss-tv}, Problem \eqref{eq:ddkmpc-online} is feasible for all $\|\xi_t\|^2\le \delta_s$, and there exist a constant $\hat{\gamma}_s>0$ and a function $\hat{\alpha}_s^{\rm on}(\cdot)\in\mathcal{K}_{\infty}$ such that
\begin{equation}
\label{eq:local-cost-bound-on-tv}
\hat{J}_{L}^{*}(\xi_t)\le \hat{\gamma}_s\|\xi_t\|^2+\hat{\alpha}_s^{on}(\hat d).
\end{equation}
\end{lemma}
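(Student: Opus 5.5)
The plan is to mirror the (unstated) proof of Lemma~\ref{lem:local-cost-bound}. We construct an explicit feasible candidate for Problem~\eqref{eq:ddkmpc-online} from the local controller $\kappa_j$ of Assumption~\ref{as:stab-tv}, and then bound its cost. Fix $t\in[s_j,s_{j+1}-1]$ with $\|\xi_t\|^2\le\delta_s=\delta^{loc}/c_{s,u}$. By \eqref{eq:stab-tv:bound}, $V_{s,j}(\xi_t)\le\delta^{loc}$, i.e., $\xi_t\in\mathbb{X}_{\delta,j}$.

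\textbf{Candidate input and state trajectory.} Set $\bar u_{i|t}:=\kappa_j(\bar\xi_{i|t})$, where $\bar\xi_{0|t}=\xi_t$ and $\bar\xi_{i+1|t}$ is the state actually generated by the true time-varying dynamics \eqref{eq:sys-tv-xi:xi}, i.e., $f_{t+i}(\bar\xi_{i|t},\bar u_{i|t})$. We then write
\[
f_{t+i}(\bar\xi_{i|t},\bar u_{i|t}) = f_{s_j}(\bar\xi_{i|t},\kappa_j(\bar\xi_{i|t})) + d_i .
\]
The drift term $d_i$ is controlled by Assumption~\ref{as:tv-bound} through \eqref{eq:tv-bound:f}: its magnitude scales like $L_f(|t-s_j|+i)\|\xi_t\|$, where $|t-s_j|\le T_0$ and $i\le L$. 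Provided this lies in $\mathcal I$, which holds for small $\delta_s$ or small $L_f$ and should be stated as a standing requirement, invariance of $\mathbb{X}_{\delta,j}$ and \eqref{eq:stab-tv:decrease} give
\[
V_{s,j}(\bar\xi_{i|t})\le\rho_s^iV_{s,j}(\xi_t)+\textstyle\sum_k\rho_s^{i-1-k}\alpha_s(\|d_k\|).
\]
Hence $\|\bar\xi_{i|t}\|^2\le c_0\rho_s^i\|\xi_t\|^2$ plus a drift term, with $c_0=c_{s,u}/c_{s,l}$. Input constraints are met because $\kappa_j$ maps $\mathbb X_{\delta,j}$ into $\mathcal U$ near the equilibrium, and feasibility again assumes $\delta_s$ is small enough.

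\textbf{Remaining decision variables.} Take $\bar y_{i|t}:=y_{t+i}$, the true outputs, and
\[
g_t:=(\bar{\mathbf K}_{s_j}+\gamma I)^{-1}\mathbf k_{s_j}(u_{\rm ini},y_{\rm ini},\bar u),\qquad h_t:=Y_{F,s_j}g_t-\bar y .
\]
Then constraint \eqref{eq:ddkmpc-online-ker} holds by construction, so the candidate is feasible. To bound $h_t$, split it as
\[
h_t=\big(y^{\rm ker}_{s_j}(\zeta)-q_{s_j}(\zeta)\big)+\big(q_{s_j}(\zeta)-y_{[t,t+L-1]}\big).
\]
The first bracket is at most $\bar d_{on}$ per entry by Assumption~\ref{as:est-error-bound-tv}. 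The second is the mismatch between the frozen multistep map and the true time-varying output; by \eqref{eq:tv-bound:b} it is bounded by a term proportional to $L_b$ times horizon factors times $\|\xi_t\|$. Since $\hat d\ge\bar d_{on}+L_f+L_b$, both pieces are $O(\hat d)$ on the bounded set $\|\xi_t\|^2\le\delta_s$. This gives $\|h_t\|^2\le c_h\hat d^2$, and therefore $(\lambda_h/\hat d)\|h_t\|^2\le\lambda_hc_h\hat d$. Assumption~\ref{as:bound-kernel} together with $\|(\bar{\mathbf K}_{s_j}+\gamma I)^{-1}\|\le1/\gamma$ gives $\|g_t\|^2\le M_{\rm dict}\bar k^2/\gamma^2=:c_g$, uniformly in $j$ thanks to the fixed dictionary budget. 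Hence $\lambda_g\hat d\|g_t\|^2\le\lambda_gc_g\hat d$.

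\textbf{Stage-cost sum.} Using Lipschitz continuity of $\kappa_j$ (uniform in $j$) and $\|y_{t+i}\|\le\|\bar\xi_{i+1|t}\|$ from the definition \eqref{eq:xi}, we get
\[
\textstyle\sum_i\ell\le\lambda_{\max}(R,Q)(1+L_\kappa^2)\sum_i\|\bar\xi_{i|t}\|^2\le\frac{\lambda_{\max}(R,Q)(1+L_\kappa^2)c_0}{1-\rho_s}\|\xi_t\|^2+\text{drift terms}.
\]
The main obstacle is the drift terms. They are proportional to $\|\xi_t\|$ times $L_f,L_b$ (linear, not $\mathcal K_\infty$ in $\hat d$ alone), and they also grow with $L$ and $T_0$. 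I would first try Young's inequality, writing $L_f\|\xi_t\|\le\tfrac12(L_f^2+\|\xi_t\|^2)$, to split each drift contribution into a part absorbed into $\hat\gamma_s\|\xi_t\|^2$ and a part bounded by a $\mathcal K_\infty$ function of $L_f+L_b\le\hat d$. The $(L,T_0)$-dependent constants are absorbed into $\hat\gamma_s$ and $\hat\alpha_s^{on}$. Collecting all terms then yields \eqref{eq:local-cost-bound-on-tv}.
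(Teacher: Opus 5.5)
Your construction is the paper's. You apply $\kappa_j$ along the true time-varying trajectory and treat $f_{t+i}-f_{s_j}$ as a disturbance in \eqref{eq:stab-tv:decrease}. You take $\bar y$ as the true outputs and $g_t=(\bar{\mathbf K}_{s_j}+\gamma I)^{-1}\mathbf k_{s_j}(\cdot)$. You split $h_t$ into the frozen kernel error (Assumption~\ref{as:est-error-bound-tv}) plus the frozen-versus-actual output mismatch from \eqref{eq:tv-bound:b}. Your bound $\|h_t\|^2\le c_h\hat d^2$ on the local set is slightly cleaner than the paper's, which keeps a $\hat d$-dependent $L_b^2\|\xi_t\|^2/\hat d$ term inside $\hat\gamma_s$.

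The step that needs fixing is the one you call the main obstacle. It is not a real obstacle, and Young's inequality is the wrong tool for it. The drift enters the state bound through the nonlinear $\alpha_s$, so splitting its argument leaves a term like $\alpha_s(c\|\xi_t\|^2)$. That term need not be $O(\|\xi_t\|^2)$ near the origin (take $\alpha_s=\sqrt{\cdot}$), so it cannot be absorbed into $\hat\gamma_s\|\xi_t\|^2$.

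Use instead the device you already used for $h_t$. Because the lemma is local, invariance of $\mathbb{X}_{\delta,j}$ gives $\|\bar\xi_{i|t}\|^2\le\delta^{loc}/c_{s,l}$. The one-step drift, obtained from \eqref{eq:tv-bound:f} with $i=1$ at the point $\bar\xi_{i|t}$ (not $\xi_t$), then satisfies $\|d_i\|\le L_f(T_0+L)\sqrt{\delta^{loc}/c_{s,l}}$. Hence $\sum_k\rho_s^{i-1-k}\alpha_s(\|d_k\|)\le\alpha_s(c\,\hat d)/(1-\rho_s)$ by $L_f\le\hat d$, and this goes directly into $\hat\alpha_s^{\rm on}$. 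This is exactly the paper's $\bar\alpha_s(L_f)$ term in \eqref{eq:xi-bound-tv}.

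You also do not need Lipschitz continuity of $\kappa_j$, which Assumption~\ref{as:stab-tv} (unlike Assumption~\ref{as:stab}) does not provide. By \eqref{eq:xi}, $\xi_{t+i+1}$ contains both $u_{t+i}$ and $y_{t+i}$. So $\ell(\bar u_{i|t},\bar y_{i|t})\le\lambda_{\max}(R,Q)\|\bar\xi_{i+1|t}\|^2$ directly, which is how the paper bounds the stage-cost sum.
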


\begin{proof}
For $t \in [s_j, s_{j + 1} - 1]$, consider a Lyapunov function $V_{s,j}(\xi)$ and its corresponding feedback $\kappa_{j}(\xi)$ satisfying Assumption \ref{as:stab-tv}.
In the following, we use $y_{j,t}$ to represent the output generated from the time-invariant nonlinear system at time $s_j$ as follows
\begin{subequations}\label{eq:sys-tv-sj}
    \begin{align}
        \xi_{j,t + 1} & = f_{s_j}(\xi_t, u_t)\\
        y_{j,t} & = b_{s_j}(\xi_t).
    \end{align}
\end{subequations}
For $\|\xi_t\| \le \delta_s$, Assumption \ref{as:stab-tv} implies that $V_{s,j}(\xi_t) \le \delta^{loc}$.
Hence, we can let $\bar{u}_{0|t} = \kappa_j(\xi_t)$.
Applying this feedback control input into \eqref{eq:sys-tv-xi} yields
\begin{align*}
    &\xi_{t + 1} = f_{t}(\xi_t, \kappa_j(\xi_t)) \\
    &=  f_{s^{}_j}(\xi_t, \kappa_j(\xi_t))  
    +  (\underbrace{ f_{t}(\xi_t, \kappa_j(\xi_t)) - f_{s^{}_j}(\xi_t, \kappa_j(\xi_t))}_{\triangleq d_j} 
\end{align*}
where $\|d_j\|  \le L_f |t - s_j| \|\xi_t\| \le L_f T_0 \delta_s$.
Hence, Assumption \ref{as:stab-tv} implies that for small enough $L_f$, it holds that $\|\xi_{t + 1}\| \le \delta_s$ and we can construct $\bar{u}_{1|t} = \kappa_j(\xi_{t + 1})$.
Recursively, for small enough $L_f$, we are able to construct our candidate input as $\bar{u}_{i|t} = \kappa_j(\xi_{i+t})$, $i \in [0, L - 1]$.
In addition, based on Assumption \ref{as:stab-tv}, for $j \in \mathbb{N}$, there exist constants $\bar{c}_{0} >0$, $\rho_s \in (0, 1)$, and a function $\bar{\alpha}_{s}(\cdot) \in \mathcal{K}_{\infty}$ such that the resulting $\xi_{t + i}$ from \eqref{eq:sys-tv-xi} satisfying
\begin{equation}
\label{eq:xi-bound-tv}
    \|\xi_{t + i}\|^2 \le \bar{c}_{0} \rho_{s}^i \|\xi_t\|^2 + \bar{\alpha}_{s}(L_f).
\end{equation}
Next, we construct candidate solutions $(\bar{y}_{[0, L - 1]|t}, g_t, h_t)$ for Problem \eqref{eq:ddkmpc-online}.
Let $\bar{y}_{i|t}$ be the resulting output from \eqref{eq:sys-tv-xi}, $g_t := (\bar{\mathbf{K}}_{s_j} + \gamma I)^{-1} \textbf{k}_{s_j}(\xi_t, \bar{u}_{[0,L - 1]|t})$, and $h_{[0,L - 1]|t} = y^{ker}_{[0,L - 1]|t} - y_{[t, t + L- 1]}$.
Here, $\|g_t\|^2 \le c_g$ according to Assumption \ref{as:bound-kernel} and 
\begin{align}
\label{eq:h-bound-tv}
    \|h_t\|^2 & \le 2 \|y^{ker}_{[0,L - 1]|t} - y_{j,[t, t + L - 1]}\|^2 \nonumber \\
    &~~~ + 2 \|y_{j,[t, t + L - 1]} - y_{[t,t + L - 1]}\|^2 \nonumber \\
    &\le  2 L p \bar{d}_{on}^2 + 2 (\sum_{i = 0}^{L - 1}L_b(T_0 + i)i)^2 \|\xi_t\|^2. 
\end{align}
Note that $\hat{d} \ge \bar{d}_{on} + L_f + L_b $.
It can be derived that 
\begin{align*}
    J^*(\xi_t) & \le \sum_{i = 0}^{L - 1} \|\bar{u}_{i|t}\|_R^2 + \|\bar{y}_{i|t}\|_Q^2 + \lambda_h/\hat{d}\|h_t\|^2 + \lambda_g \hat{d} \|g_t\|^2\\
    & \overset{\eqref{eq:h-bound-tv}}{\le} \lambda_{\max}(Q,R)\sum_{j = 1}^{L}\|\bar{\xi}_{i|t}\|^2 + 2 \lambda_h L p \hat{d} \\
    &~~+ 2(\lambda_h/\hat{d})(\sum_{i = 0}^{L - 1}L_b(T_0 + i)i)^2\|\xi_t\|^2 + \lambda_g c_g \hat{d} \\
    & \overset{\eqref{eq:xi-bound-tv}}{\le} \hat{\gamma}_{s} \|\xi_t \|^2 + \hat{\alpha}_{s}^{on}(\hat{d}) 
\end{align*}
where $\hat{\gamma}_s := \lambda_{\max}(Q,R)\sum_{j = 1}^{L} c_0 \rho_s^i + 2(\lambda_h/\hat{d})(\sum_{i = 0}^{L - 1}L_b(T_0 + i)i)^2$ and $\hat{\alpha}_{s}^{on}(\hat{d}) := \lambda_{\max}(Q,R) L \bar{\alpha}_s(\hat{d}) + 2 \lambda_h L p \hat{d} + \lambda_g c_g \hat{d}$.
This completes the proof.
\end{proof}

Building on this lemma, we are ready to show recursive feasibility and closed-loop stability.
Define the Lyapunov candidate by
\begin{equation}
\label{eq:online-lyapunov}
\hat{Y}_{L}(\xi_t):=\hat{J}_{L}^{*}(\xi_t)+V_{o}(\xi_t).
\end{equation}

\begin{theorem}[Practical stability of online DDKPC]
\label{thm:online-stability}
Suppose that Assumptions \ref{as:tv-io}--\ref{as:uioss-tv} hold.
Then, for any constant $\hat{Y}>0$, there exist constants
$L_{\hat{Y}}, \gamma_{\hat{Y}}>0$ and $\hat{d}_{0}>0$ such that, if
$\hat{Y}_{L}(\xi_0)\le \hat{Y}$, $L>L_{\hat{Y}}$, and $\bar{d}_{\rm on}+L_f+L_b\le \hat{d} \le \hat d_0$,  Problem \eqref{eq:ddkmpc-online} is feasible for all $t\ge 0$.
Moreover, the Lyapunov function candidate \eqref{eq:online-lyapunov} satisfies
\begin{subequations}
\label{eq:online-lyap-ineq}
\begin{align}
&\epsilon_o\|\xi_t\|^2
\le
\hat{Y}_{L}(\xi_t)
\le
\gamma_{\hat{Y}}\|\xi_t\|^2+\hat{\alpha}_s^{ on}(\hat d),
\label{eq:online-lyap-ineq:bound}\\
&\hat{Y}_{L}(\xi_{t+1})-\hat{Y}_{L}(\xi_t)
\le
-\hat{a}_L\epsilon_o\|\xi_t\|^2+\hat{\alpha}_{\hat{Y}}(\hat d),
\label{eq:online-lyap-ineq:decrease}
\end{align}
\end{subequations}
where $\hat{a}_L>0$ and $\hat{\alpha}_{\hat{Y}}(\cdot)\in\mathcal{K}_{\infty}$.
\end{theorem}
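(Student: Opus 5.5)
The plan is to mirror the three-step structure of the proof of Theorem~\ref{thm:stab-ideal}. Lemma~\ref{lem:local-cost-bound-tv} replaces Lemma~\ref{lem:local-cost-bound}, and two additional error sources are tracked. The first is the drift between the frozen snapshot model $(f_{s_j},b_{s_j})$ and the true evolution $(f_t,b_t)$ within a period, which is controlled by Assumption~\ref{as:tv-bound}. The second is the jump of the value function at the predictor reconstruction instants $t=s_{j+1}$. Throughout, $\hat d\ge \bar d_{on}+L_f+L_b$ is the single aggregated uncertainty parameter, so every drift term bounded by $L_f$ or $L_b$ is absorbed into a $\mathcal K_\infty$ function of $\hat d$.

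\emph{Step I (bounds).} The lower bound $\epsilon_o\|\xi_t\|^2\le \hat Y_L(\xi_t)$ follows exactly as before from \eqref{eq:uioss-tv:decrease} and nonnegativity of $\hat J_L^*$ and $V_o$. For the upper bound, I split into two cases. If $\|\xi_t\|^2\le\delta_s$, Lemma~\ref{lem:local-cost-bound-tv} and \eqref{eq:uioss-tv:bound} give $(\hat\gamma_s+c_{o,u})\|\xi_t\|^2+\hat\alpha_s^{on}(\hat d)$. Otherwise, the sublevel bound $\hat Y_L\le\hat Y$ (to be maintained inductively) gives $\frac{\hat Y}{\delta_s}\|\xi_t\|^2$. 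I then set $\gamma_{\hat Y}:=\max\{\hat\gamma_s+c_{o,u},\hat Y/\delta_s\}$.

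\emph{Step II (candidate at $t+1$).} Given the optimizer at $t$, I propagate the true time-varying system \eqref{eq:sys-tv-xi} under $\bar u^*_{[0,L-1]|t}$ and sum \eqref{eq:uioss-tv:decrease} along this trajectory. When comparing true outputs with $\bar y^*+h^*=Y_{F,s_j}g^*_t$, I use three ingredients: Assumption~\ref{as:est-error-bound-tv} for the kernel error with respect to $q_{s_j}$, estimate \eqref{eq:tv-bound:b} for the frozen-versus-true discrepancy (as in \eqref{eq:h-bound-tv}), and $\|h^*_t\|^2\le(\hat d/\lambda_h)\hat Y$. This yields $\sum_i\epsilon_o\|\xi_{t+i}\|^2\le \hat Y+\hat\alpha_1(\hat d)$. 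Hence some $i_\xi$ satisfies $\|\xi_{t+i_\xi}\|^2\le(\hat Y+\theta)/(\epsilon_o(L-1))\le\delta_s$ once $L>L_0:=1+\hat Y/(\delta_s\epsilon_o)$. The candidate is then built as follows: shift the tail of the optimal input up to $i_\xi-2$, append $\kappa_{j'}(\xi_{t+i+1})$, take $\bar y$ as the true outputs, and choose $g_{t+1}$ and $h_{t+1}$ from the predictor \eqref{eq:kernel-estimator-tv}. Here $j'$ is the index of the predictor active at $t+1$, namely $j'=j$ if $t+1<s_{j+1}$ and $j'=j+1$ otherwise. Feasibility is immediate because $g$ and $h$ are defined by the predictor equation and $\kappa_{j'}$ respects $\mathcal U$. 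The local segment is controlled by the recursion of Lemma~\ref{lem:local-cost-bound-tv}, i.e., \eqref{eq:xi-bound-tv}, with the drift $d_j$ again absorbed.

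\emph{Step III (decrease) and main obstacle.} I bound $\hat J_L^*(\xi_{t+1})\le \hat J_L^*(\xi_t)-\ell^*_{0|t}+\mathbf I_1+\mathbf I_2+\mathbf I_3$ exactly as in \eqref{eq:next-J-decrease-1}. For $\mathbf I_1$, the error $e_i$ now also contains the drift term from \eqref{eq:tv-bound:b}, still bounded by $\hat\alpha_2(\hat d)$ up to a factor $\|\xi_t\|^2$ whose coefficient scales with $L_b\le\hat d$. The main obstacle is the reconstruction step $t+1=s_{j+1}$. There the candidate's slack $h_{t+1}$ is measured against the new predictor, while the shifted part of the cost came from the old one. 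My first attempt is to note that the candidate never reuses the old $g^*_t,h^*_t$: the term $\mathbf I_1$ compares only true outputs $y_{t+i+1}$ with $\bar y^*_{i+1|t}$, and $\mathbf I_3$ is bounded using only \eqref{eq:kernel_est_bound_on} at $s_{j+1}$ and $\|g\|^2\le c_g$, giving $\mathbf I_3\le(2\lambda_h Lp+\lambda_g c_g)\hat d$ plus a $\|\xi_t\|^2$ term with coefficient $O(L_b^2/\hat d)\le O(\hat d)$. Thus the switching costs nothing extra beyond this uniform bound, and the inequality holds at every $t$ rather than only within a period. Combining with $V_o(\xi_{t+1})-V_o(\xi_t)\le-\epsilon_o\|\xi_t\|^2+\ell^*_{0|t}+O(\hat d^2)$ gives \eqref{eq:online-lyap-ineq:decrease} with $\hat a_L=1-\frac{c_1}{\epsilon_o^2(L-1)(1-\rho_s)}-c_2\hat d$. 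This forces $L>L_1$ and $\hat d$ small. Finally, I choose $\hat d_0\le\min\{\hat\alpha_{\hat Y}^{-1}(\hat a_L\epsilon_o\delta_s/\gamma_{\hat Y}\cdot\text{const}),\,(\hat\alpha_1+\hat\alpha_s^{on})^{-1}(\theta)\}$ so that $\hat Y_L\le\hat Y$ is invariant, which closes the induction and gives recursive feasibility, with $L_{\hat Y}:=\max\{L_0,L_1\}$.
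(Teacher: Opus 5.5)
Your proposal is correct and follows the paper's proof essentially step for step. It uses the sublevel-set upper bound via Lemma~\ref{lem:local-cost-bound-tv}, the UIOSS summation with the frozen-versus-true output split to locate $i_\xi$, and the shifted candidate completed by $\kappa_{j^+}$ with $g_{t+1},h_{t+1}$ built from the newly frozen dictionary (which is exactly why the reconstruction instant costs nothing extra). It closes with the same invariance argument for small $\hat d_0$. The only cosmetic difference is that you keep the drift-induced $\|\xi_t\|^2$ terms inside $\hat a_L$, whereas the paper absorbs them into additive $\mathcal K_\infty$ terms using the sublevel-set bound.

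One small fix concerns the lower bound: take it from $\hat J_L^*\ge 0$ and \eqref{eq:uioss-tv:bound}, with $\epsilon_o\le c_{o,l}$ assumed without loss of generality, as the paper does. Routing it through \eqref{eq:uioss-tv:decrease} picks up the mismatch between $y_t$ and $\bar y^*_{0|t}$, so the bound would no longer be exact. Because that mismatch contains $h^*_{0|t}$, it is also $O(\hat d)$ rather than the $O(\hat d^2)$ you state in the one-step $V_o$ estimate. This is harmless there, since it enters only as an additive $\mathcal K_\infty$ term.
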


\begin{proof}
The proof follows the same structure as that of Theorem \ref{thm:stab-ideal}.
We only detail the additional estimates caused by the time-varying dynamics.

First, by Lemma \ref{lem:local-cost-bound-tv} and \eqref{eq:uioss-tv:bound}, there exists $\gamma_{\hat Y}>0$ such that the upper bound in
\eqref{eq:online-lyap-ineq:bound} holds on the sublevel set
$\{\xi:\hat Y_L(\xi)\le \hat Y\}$.
The lower bound follows from $\hat J_L^*(\xi_t)\ge 0$ and
\eqref{eq:uioss-tv:bound}; without loss of generality, $\epsilon_o$ is taken no larger than $c_{o,l}$.

Next, assume that Problem \eqref{eq:ddkmpc-online} is feasible at time $t$, and denote its optimal solution by $(\bar{u}^*_{[0,L-1]|t},\bar{y}^*_{[0,L-1]|t},g_t^*,h_t^*)$.
Let $j=\lfloor t/T_0\rfloor$ and $s_j=jT_0$.
Denote by $\xi_{[t+1,t+L]}$ and $y_{[t,t+L-1]}$ the trajectory of the actual time-varying system \eqref{eq:sys-tv-xi} generated by the input sequence $\bar{u}^*_{[0,L-1]|t}$.
For each $i\in[0,L-1]$, define the time-variation-induced perturbation by $d_{j,t+i}
:=
f_{t+i}(\xi_{t+i},\bar u^*_{i|t})
-
f_{s_j}(\xi_{t+i},\bar u^*_{i|t})$.
By Assumption \ref{as:tv-bound} and the choice of $\hat d_0$, this perturbation belongs to the admissible compact set $\mathcal I$ along the considered trajectory. Hence, using \eqref{eq:uioss-tv:decrease}, we obtain
\begin{align}
&V_o(\xi_{t+L})-V_o(\xi_t) =
\sum_{i=0}^{L-1}
\big(
V_o(\xi_{t+i+1})-V_o(\xi_{t+i})
\big) \nonumber\\
&\le
\sum_{i=0}^{L-1}
\left(
-\epsilon_o\|\xi_{t+i}\|^2
+\frac{1}{2}\|y_{t+i}\|_Q^2
+\|\bar u^*_{i|t}\|_R^2
\right).
\label{eq:online-uioss-sum}
\end{align}
Since $V_o(\xi_{t+L})\ge 0$, it follows that
\begin{align}
&\sum_{i=0}^{L-1}\epsilon_o\|\xi_{t+i}\|^2 \nonumber\\
&\le
V_o(\xi_t)
+
\sum_{i=0}^{L-1}
\left(
\|\bar u^*_{i|t}\|_R^2
+\frac{1}{2}\|y_{t+i}\|_Q^2
\right) \nonumber\\
&\le
\hat Y_L(\xi_t)
+
2\lambda_{\max}(Q)
\sum_{i=0}^{L-1}
\|(\bar y^*_{i|t}+h^*_{i|t})-y_{t+i}\|^2 \nonumber\\
&\quad+
2\lambda_{\max}(Q)
\sum_{i=0}^{L-1}
\|h^*_{i|t}\|^2 
\label{eq:online-state-bound-raw}
\end{align}
where $\ell^*_{i|t}:=\ell(\bar u^*_{i|t},\bar y^*_{i|t})$.

It remains to bound the two additional terms in
\eqref{eq:online-state-bound-raw}.
Let $y_{j,t+i}$ denote the output at time $t+i$ generated by the frozen system at $s_j$ (i.e. system \eqref{eq:sys-tv-sj}) under the same initial condition $\xi_t$ and input sequence $\bar u^*_{[0,L-1]|t}$.
Then,
\begin{align}
&\|(\bar y^*_{i|t}+h^*_{i|t})-y_{t+i}\|^2 \nonumber\\
&\le
2\|(\bar y^*_{i|t}+h^*_{i|t})-y_{j,t+i}\|^2
+
2\|y_{j,t+i}-y_{t+i}\|^2 .
\label{eq:online-output-split}
\end{align}
The first term on the right-hand side of \eqref{eq:online-output-split} is bounded by Assumption \ref{as:est-error-bound-tv}, since
$\bar y^*_{[0,L-1]|t}+h^*_{[0,L-1]|t}=Y_{F,s_j}g_t^*$ is generated by  \eqref{eq:kernel-estimator-tv}.
The second term is bounded, on the sublevel set
$\{\xi:\hat Y_L(\xi)\le \hat Y\}$, by a $\mathcal K_\infty$ function of $L_b$ due to Assumption \ref{as:tv-bound}.
Moreover, $\sum_{i=0}^{L-1}\|h^*_{i|t}\|^2
\le
\frac{\hat d}{\lambda_h}\hat J_L^*(\xi_t)
\le
\frac{\hat d}{\lambda_h}\hat Y$.
Consequently, there exists a function
$\hat\alpha_1(\cdot)\in\mathcal K_\infty$, depending on the fixed constants
$L$, $T_0$, and $\hat Y$, such that
\begin{equation}
\label{eq:online-state-bound}
\sum_{i=0}^{L-1}\epsilon_o\|\xi_{t+i}\|^2
\le
\hat Y_L(\xi_t)+\hat\alpha_1(\hat d).
\end{equation}

Inequality \eqref{eq:online-state-bound} is the online counterpart of \eqref{eq:alpha_1} in the proof of Theorem \ref{thm:stab-ideal}, with the nominal kernel error bound $\bar d$ replaced by the unified online perturbation scale $\hat d$.
Thus, for sufficiently large $L$ and sufficiently small $\hat d_0$, \eqref{eq:online-state-bound} ensures the existence of an index $i_\xi\in\{1,\ldots,L-1\}$ such that $\xi_{t+i_\xi}$ lies in the local region specified by Assumption \ref{as:stab-tv}.

At time $t+1$, let $j^+=\lfloor (t+1)/T_0\rfloor$ and $s_{j^+}=j^+T_0$.
The candidate input sequence is constructed by shifting the optimal sequence at time $t$ up to the index $i_\xi-1$ and appending the local controller $\kappa_{j^+}$ afterwards.
The corresponding $g_{t+1}$ and $h_{t+1}$ are constructed using  $\mathcal{D}_{s_{j^+}}^{c}$.
With this candidate solution, the same decomposition as in the proof of Theorem \ref{thm:stab-ideal} gives
\begin{align}
\hat J_L^*(\xi_{t+1})
&\le
-\ell^*_{0|t}
+
\hat J_L^*(\xi_t)
+
\hat I_1+\hat I_2+\hat I_3,
\label{eq:online-shifted-cost}
\end{align}
where $\hat I_1
:=
\sum_{i=0}^{i_\xi-2}
\left(
\|y_{t+i+1}\|_Q^2
-
\|\bar y^*_{i+1|t}\|_Q^2
\right)$, $\hat I_2
:=
\sum_{i=i_\xi-1}^{L-1}
\left(
\|\bar u_{i|t+1}\|_R^2
+
\|y_{t+i+1}\|_Q^2
\right)$, and $\hat I_3
:=
\frac{\lambda_h}{\hat d}\|h_{t+1}\|^2
+
\lambda_g\hat d\|g_{t+1}\|^2$.
The term $\hat I_1$ is bounded by a $\mathcal K_\infty$ function of $\hat d$ using Assumptions \ref{as:tv-bound} and \ref{as:est-error-bound-tv}, since it measures the mismatch between the shifted predicted output and the actual time-varying output.
For $\hat I_2$, Lemma \ref{lem:local-cost-bound-tv} and the choice of $i_\xi$ imply that $\hat I_2
\le
\frac{\lambda_{\max}(Q,R)\bar c_0\gamma_{\hat Y}}
{\epsilon_o(L-1)(1-\rho_s)}
\|\xi_t\|^2
+
\hat\alpha_2(\hat d)$
for some $\hat\alpha_2\in\mathcal K_\infty$.
Moreover, by Assumptions \ref{as:tv-bound} and \ref{as:est-error-bound-tv}, together with the boundedness of the kernel, there exists $\hat\alpha_3\in\mathcal K_\infty$ such that $\hat I_3\le \hat\alpha_3(\hat d)$.
Substituting these bounds into \eqref{eq:online-shifted-cost} yields
\begin{align}
\hat J_L^*(\xi_{t+1})
&\le
-\ell^*_{0|t}
+\!
\hat J_L^*(\xi_t)
+\!
\frac{\lambda_{\max}(Q,R)\bar c_0\gamma_{\hat Y}}
{\epsilon_o(L-1)(1-\rho_s)}
\|\xi_t\|^2
+ \!
\hat\alpha_4(\hat d),
\label{eq:online-J-decrease}
\end{align}
where $\hat\alpha_4\in\mathcal K_\infty$.

On the other hand, applying \eqref{eq:uioss-tv:decrease} for one step gives $V_o(\xi_{t+1})-V_o(\xi_t)
\le
-\epsilon_o\|\xi_t\|^2
+
\ell^*_{0|t}
+
\hat\alpha_5(\hat d)$
for some $\hat\alpha_5\in\mathcal K_\infty$, where $\hat\alpha_5$ accounts for the mismatch between the predicted and the actual output at time $t$.
Combining this inequality with \eqref{eq:online-J-decrease}, we obtain $\hat Y_L(\xi_{t+1})-\hat Y_L(\xi_t)
\le
-\hat a_L\epsilon_o\|\xi_t\|^2
+
\hat\alpha_{\hat Y}(\hat d)$, where $\hat a_L
:=
1-
\frac{\lambda_{\max}(Q,R)\bar c_0\gamma_{\hat Y}}
{\epsilon_o^2(L-1)(1-\rho_s)}$
is positive for all sufficiently large $L$, and
$\hat\alpha_{\hat Y}:=\hat\alpha_4+\hat\alpha_5\in\mathcal K_\infty$.
Finally, for sufficiently small $\hat d_0$, the above inequality implies that
$\hat Y_L(\xi_t)\le\hat Y$ leads to $\hat Y_L(\xi_{t+1})\le\hat Y$.
Recursive feasibility and \eqref{eq:online-lyap-ineq:decrease} then follow by induction.
This completes the proof.
\end{proof}


\begin{remark}[Online DDKPC under noisy measurement]
    When the available outputs are corrupted by bounded measurement noise (i.e., $\|n_t\| \le \bar{n}$), the same robustification principle used in Section \ref{sec:discussion-noise} applies here as well.
Specifically, one may choose $\hat{d} \ge \bar{d}_{on} + L_f + L_b + \bar{n}$.
The resulting robustified online DDKPC formulation is obtained by modifying constraint \eqref{eq:ddkmpc-online-ini} with \eqref{eq:ddkmpc-noise:ini}.
Under this modification, Lemma \ref{lem:local-cost-bound-tv} and Theorem \ref{thm:online-stability} still holds.
\end{remark}

\color{black}
\section{Numerical Examples}
We evaluate the proposed DDKPC framework on two simulation examples. The first is a nonlinear one-dimensional vehicle benchmark, where the proposed schemes are tested in noise-free, noisy, and slowly time-varying settings. 
This example is also used to compare the closed-loop performance of different numerical solvers. 
The second example is a PyBullet simulation of an A1 quadruped robot carrying an off-center payload, which demonstrates the applicability of DDKPC to a nonlinear and contact-rich locomotion system. 
The one-dimensional vehicle simulations are implemented in MATLAB 2022a, whereas the quadruped simulations are implemented in Python using PyBullet. 
All experiments are conducted on a laptop equipped with a 14-core Intel i7-12700H processor at 2.3 GHz.

\subsection{One-dimensional vehicle model}
\label{sec:sim:one}
\subsubsection{Simulation setup}
We consider a one-dimensional vehicle model governed by the following
continuous-time dynamics $m_0 \ddot{d} = u - c_1 \dot{d} - f_{nl}(\dot{d}, d)$,
where $m_0$ denotes the vehicle mass, $u$ is the control input, $c_1$ is
the viscous friction coefficient, and $f_{nl}(\dot{d}, d)$ represents the
nonlinear frictional forces arising from aerodynamic drag and road resistance.
For simplicity, we set $f_{nl}(\dot{d}, d)=c_2\dot d^2$, where $c_2$ denotes
the quadratic drag coefficient.
For the nominal test in Section~\ref{sec:sim_nominal}, $c_1$
and $c_2$ are kept constant, whereas the time-varying and noisy
time-varying tests in Section~\ref{sec:sim_tv} replace them by
$c_{1,t}$ and $c_{2,t}$.

Applying Euler discretization and defining the velocity $v = \dot{d}$ yield
the discrete-time state-space model
\begin{subequations}\label{eq:exm_car}
\begin{align}
    d_{t + 1} &= d_t + T_s v_t, \\
    v_{t + 1} &= v_t + {T_s}/{m_0} \big( u_t - c_1 v_t - c_2 v_t^2 \big), \\
    y_t &= d_t,
\end{align}
\end{subequations}
where $T_s = 0.02$\,s is the sampling period. The physical parameters are set
to $m_0 = 1$\,kg, $c_1 = 0.5$, and $c_2 = 0.1$. These parameters and the
state-space model \eqref{eq:exm_car} are used only to generate experimental
input-output data and to simulate the true plant; they are not available to
the data-driven controller.

To obtain informative offline data, the open-loop system is excited by a
combination of pulse signals and zero-mean white noise with variance $0.01$,
and $T=600$ samples are collected. 
Across all scenarios, for obtaining the best response, the controller hyperparameters are fixed as follows:
the past trajectory length $\eta=3$, the prediction horizon $L=10$, and the
output weighting matrix $Q=10$. 
In the simulations, rather than the separable per-stage input penalty used
in the theoretical presentation, we adopt the stacked quadratic cost
$\sum_{i=0}^{L-1}\|\bar y_{i|t}-y^s\|_Q^2
+
\sum_{i=0}^{L-1}\|\bar u_{i|t}-u^s\|_{r_u}^2
+
\sum_{i=1}^{L-1}\|\bar u_{i|t}-\bar u_{i-1|t}\|_{r_{\Delta u}}^2$.
Equivalently, the input-related terms are represented by
$\|\bar{u}_{[0,L - 1]|t}-\mathbf 1_L\otimes u^s\|_{\mathbf R}^2$,
where $\mathbf R
=
r_u I_L+r_{\Delta u}D_L^\top D_L$ and $D_L :=
\smat{
1 & -1 &        &        & 0\\
  & 1  & -1     &        &  \\
  &    & \ddots & \ddots &  \\
0 &    &        & 1      & -1
}
\in\mathbb R^{(L-1)\times L}$.
The penalty scaling parameters are chosen as $\bar{d}=10^{-3}$,
$\lambda_h/\bar{d}=10^4$, and $\lambda_g \bar{d}=5\times 10^{-3}$.
The control input is constrained to
$\mathcal{U}:=\{u\mid -1\le u\le 1\}$, and the regularization factor is
$\gamma=0.1624$. Unless otherwise specified, DDKPC uses a Gaussian kernel
$K(\zeta_i,\zeta_j)=\sigma_f^2
\exp(-\|\zeta_i-\zeta_j\|^2/(2\ell^2))$ with
$\sigma_f=0.62$ and $\ell=0.8385$. For comparison, we also test a polynomial
kernel $K(\zeta_i,\zeta_j)=\sigma_f^2(0.5\zeta_i^\top\zeta_j+1)^\ell$ with
$\sigma_f=0.05$ and $\ell=10$.
In the following, we compare a general-purpose nonlinear programming (NLP) solver
(\textit{fmincon}) with two first-order alternatives: the projected gradient
descent (GD) method in \cite{huang2023robust} and the accelerated GD method in
\cite[eqns. (43)--(45)]{wang2025optimization}, referred to as FastGD.

\subsubsection{Nominal DDKPC}
\label{sec:sim_nominal}

We first verify Algorithm~\ref{alg:ddkmpc-ideal} using noise-free data.
The closed-loop system is simulated for $20$\,s to track the constant reference
$y^s=0.5$. 
We compare the linear DDPC baseline (yellow dotted line) with Gaussian-kernel
DDKPC (green dashed, orange dashed, and blue solid lines) and polynomial-kernel
DDKPC (purple dash-dotted line).
For the standard linear DDPC, the kernel-based constraint \eqref{eq:ddkmpc-ideal:ker} is replaced by the standard Hankel matrix formulation:
\begin{equation}
    \smat{
    H_{\eta + L}(u^{\rm d})\\
    H_{\eta + L}(y^{\rm d})
    } g_t = \smat{
    \bar{u}_{[0,L - 1]|t}\\
    \bar{y}_{[0,L - 1]|t} + h_{[0,L - 1]|t}
    }.
\end{equation}

\begin{figure}[t]
	\centering
	\includegraphics[width=\linewidth]{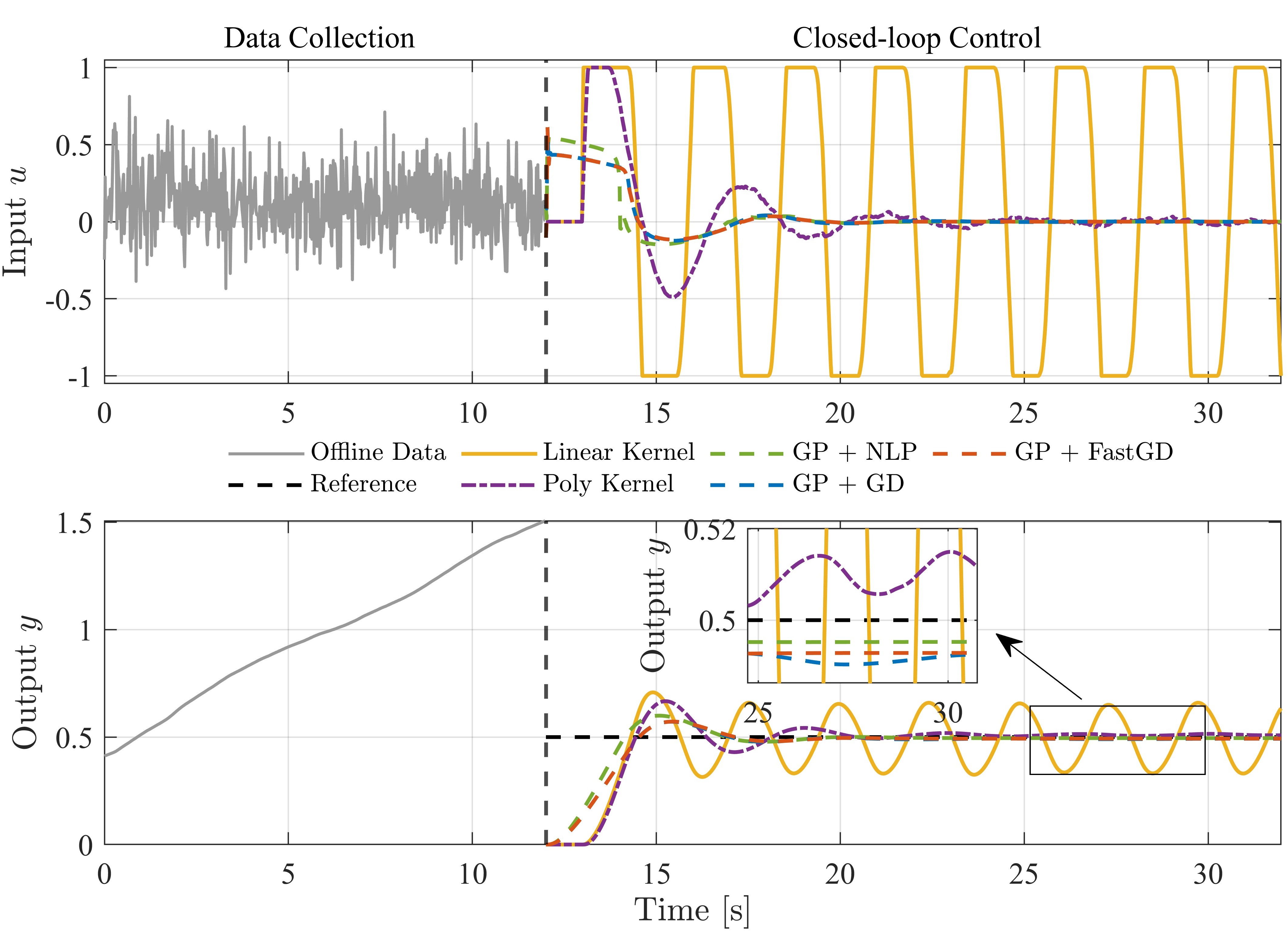}
	\caption{Closed-loop input and output trajectories under Algorithm \ref{alg:ddkmpc-ideal}, comparing different predictors and numerical solvers.}
    \label{fig:nominal}
\end{figure}

Fig.~\ref{fig:nominal} shows that the standard linear DDPC fails to stabilize
the nonlinear vehicle dynamics. In contrast, both DDKPC variants stabilize the
system and track the desired reference, consistent with the theoretical
guarantees in Section~\ref{sec:ddkmp:noise-free}. Although the nonlinearity in
\eqref{eq:exm_car} is polynomial, the Gaussian-kernel DDKPC attains slightly
better steady-state tracking accuracy than the polynomial-kernel variant,
possibly because multi-step prediction amplifies small fitting errors. For the
Gaussian-kernel DDKPC, the closed-loop tracking accuracy also depends on the
optimization solver, with the ordering NLP (\textit{fmincon}) $>$ FastGD
\cite{wang2025optimization} $>$ GD \cite{huang2023robust}. This trend is
consistent with Section~\ref{sec:discussion-computation}: higher numerical
precision leads to improved closed-loop performance.

\begin{figure}[t]
	\centering
	\includegraphics[width=\linewidth]{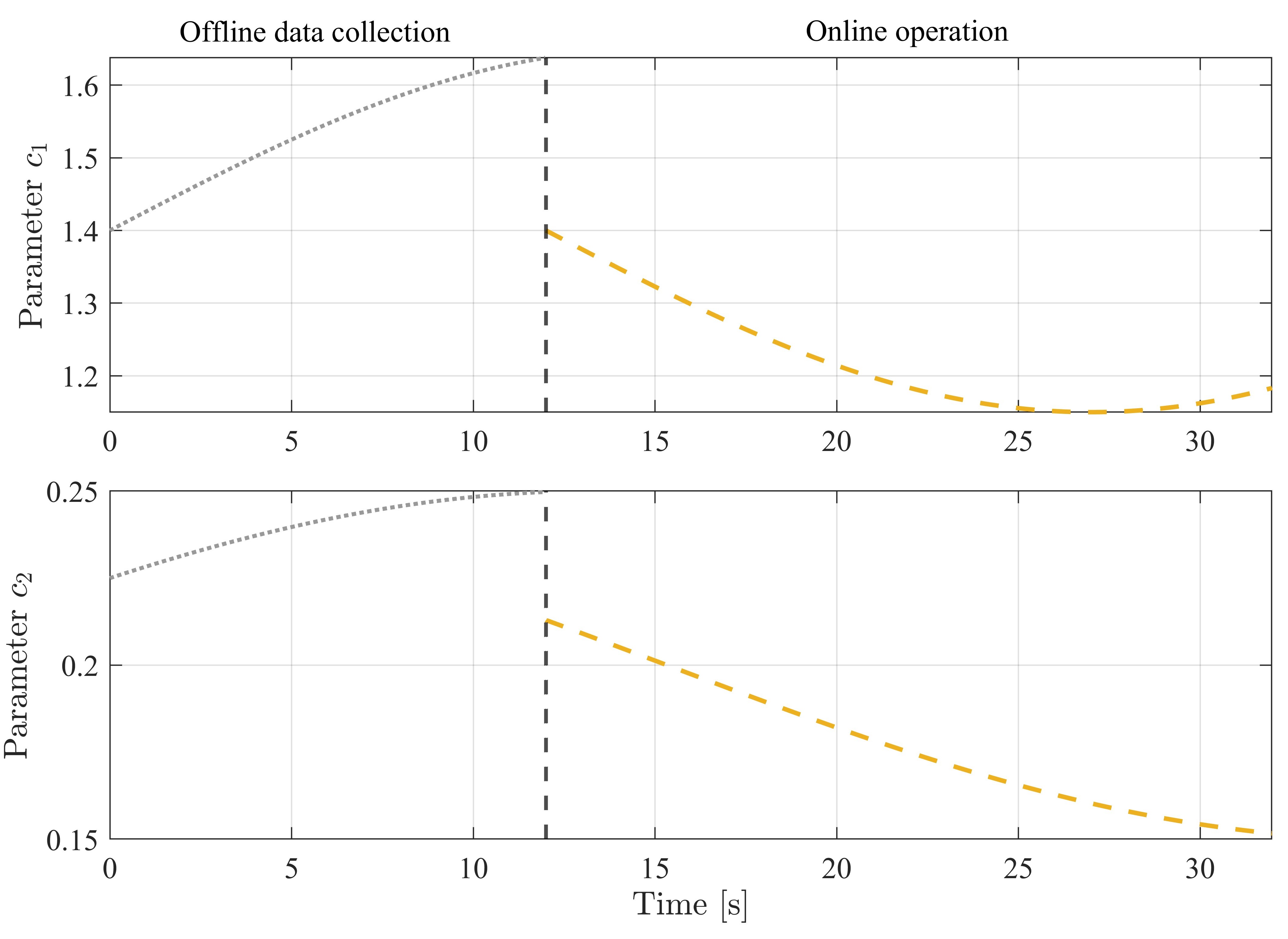}
	\caption{Time-varying parameters $c_1(t)$ and $c_2(t)$.}
    \label{fig:c1c2}
\end{figure}

\subsubsection{Online DDKPC under time-varying systems}
\label{sec:sim_tv}

We next evaluate Algorithm~\ref{alg:ddkmpc-online} on a slowly
time-varying variant of \eqref{eq:exm_car}, obtained by replacing
$c_1$ and $c_2$ with unknown coefficients $c_{1,t} =
\max\{c_{1,\min},
\bar c_1 + a_1 \sin({2\pi t}/{P_1})\}$ and $c_{2,t}=
\max\{c_{2,\min},
\bar c_2 + a_2 \sin({2\pi t}/{P_2}+\phi_2)\}$, where
$c_{1,\min} = 0.8$, $a_1 = 0.25$, $P_1 = 3000$,
$c_{2,\min} = 0.05$, $a_2 = 0.05$, $P_2 = 4000$, and
$\phi_2 = \pi/6$. Fig.~\ref{fig:c1c2} depicts the offline and
online evolutions of $c_1(t)$ and $c_2(t)$. The online trajectories
are separated from the nominal offline values, producing a gradual
mismatch between the static offline predictor and the current
input-output behavior. This mismatch motivates the periodic ALD-based update.

Since Section~\ref{sec:sim_nominal} shows that the Gaussian kernel gives better
tracking performance than the polynomial kernel, and since GD and FastGD attain
closed-loop performance comparable to the NLP solver with substantially lower
runtime, Algorithm~\ref{alg:ddkmpc-online} is evaluated using the Gaussian
kernel with the GD and FastGD solvers. 
Because the computational complexity of kernel methods grows rapidly with the dictionary size, we use a sparsified dictionary of $265$ elements selected from the full $600$-sample dataset used in Section~\ref{sec:sim_nominal}. This offline sparsification reduces the mean FastGD runtime from $55.10$ms to $23.63$ms.
Activating the online
dictionary update increases the mean FastGD runtime to $85.86$ms because of
dictionary maintenance and periodic predictor reconstruction, while preserving
a substantial computational advantage over the NLP baseline.

\begin{figure}[t]
    \centering
    \includegraphics[width=\linewidth]{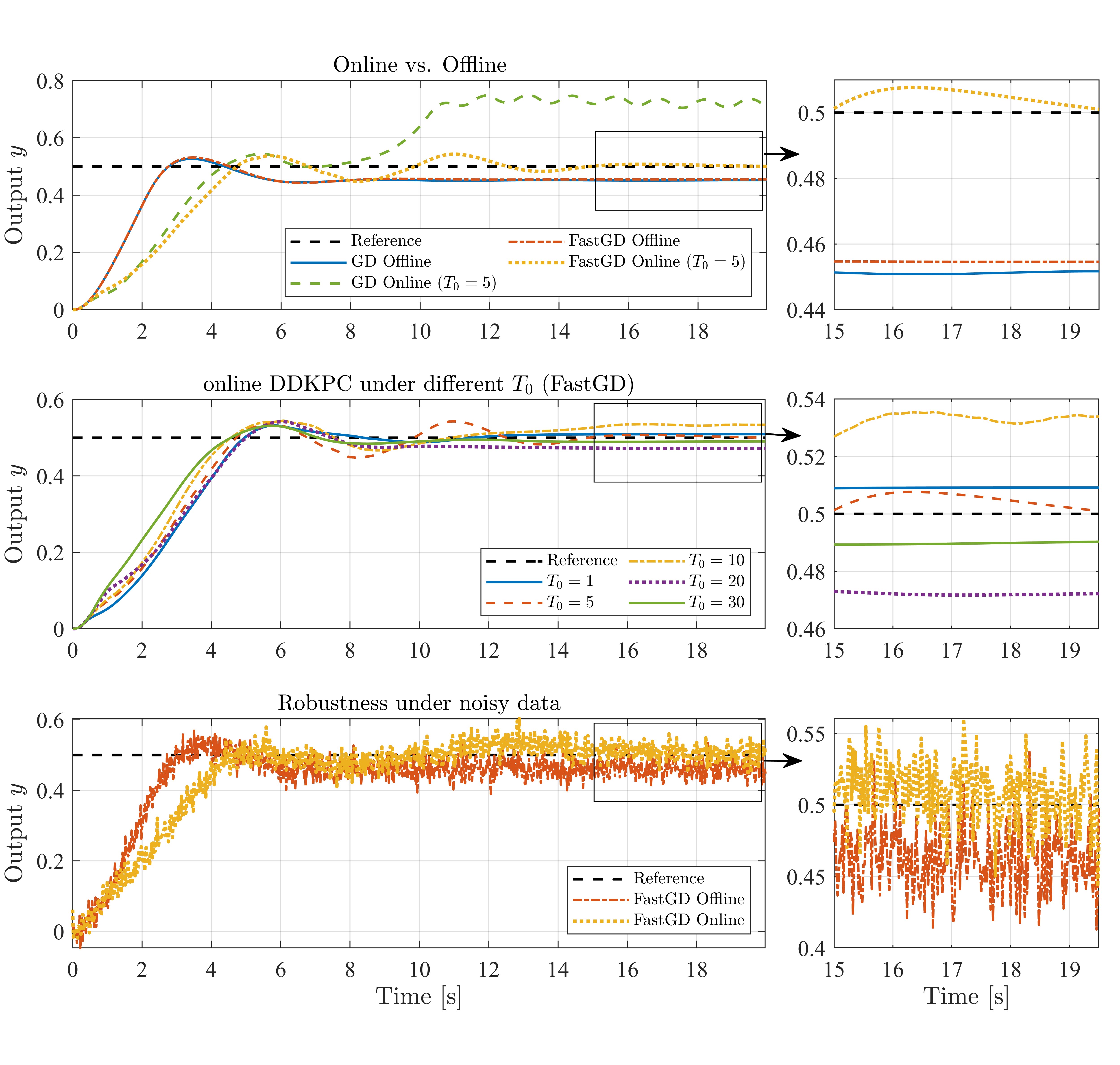}
    \caption{Closed-loop output trajectories for the time-varying system under
    Algorithm~\ref{alg:ddkmpc-online}. Top: static offline DDKPC vs. online
    DDKPC with GD and FastGD. Middle: FastGD online DDKPC with different update
    periods. Bottom: noisy-data test comparing static offline DDKPC with online
    FastGD DDKPC.}
    \label{fig:on_period}
\end{figure}

Fig.~\ref{fig:on_period} compares the closed-loop output trajectories under the
time-varying dynamics. In the top panel, the static offline configurations
exhibit a steady-state tracking bias because the sparsified dictionary is fixed
after offline data collection. With periodic online updates, FastGD reduces this
bias and brings the output closer to the reference. In contrast, the online GD
configuration does not improve upon its static counterpart. This behavior is
consistent with the lower numerical accuracy of standard GD, for which the
perturbations introduced by online predictor updates may offset the benefit of
using fresher data.

The middle panel of Fig.~\ref{fig:on_period} compares FastGD online DDKPC under
different update periods $T_0$. All tested periods
$T_0\in\{1,5,10,20,30\}$ preserve closed-loop stability, but the tracking
performance does not vary monotonically with $T_0$. This indicates that a
shorter update period is not necessarily preferable.
Therefore, the choice of $T_0$ reflects a nontrivial trade-off among tracking
accuracy, predictor variation, and computational cost, and its systematic
selection remains an important implementation issue.

The bottom panel of Fig.~\ref{fig:on_period} evaluates the noisy-data case with
$\bar n=0.001$. The online FastGD trajectory remains stable and tracks the
reference despite measurement noise, indicating that the robustified online
DDKPC can tolerate noisy data while adapting the dictionary to the slowly
time-varying plant.



\subsection{Velocity control of a quadruped robot}
\label{sec:sim-lateral}

We next consider a quadruped locomotion scenario in PyBullet to evaluate the
proposed DDKPC on a nonlinear, contact-rich system. As illustrated in
Fig.~\ref{fig:snapshot}(a), DDKPC is incorporated as an outer-loop residual
velocity controller around an existing nominal whole-body locomotion system,
while Fig.~\ref{fig:snapshot}(b) shows the corresponding PyBullet simulation
environment. A rigid payload mounted away from the trunk center changes the
closed-loop velocity response and produces persistent tracking errors. DDKPC
generates residual velocity commands for the nominal controller using only
measured input-output data.

\begin{figure}
    \centering
    \includegraphics[width=\linewidth]{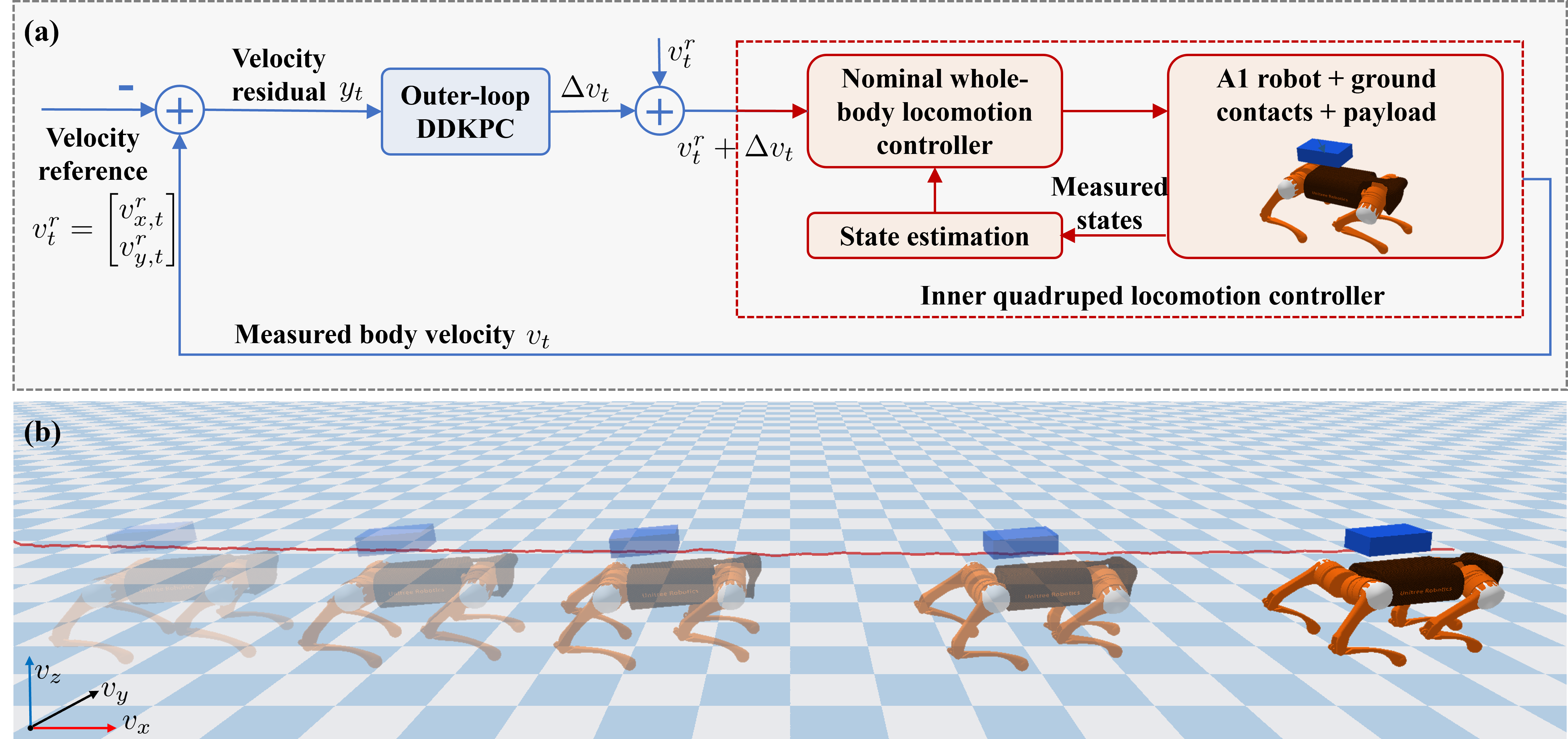}
    \caption{Velocity control architecture and PyBullet simulation
of the payload-carrying A1 quadruped.}
    \label{fig:snapshot}
\end{figure}

\subsubsection{Simulation setup}
\label{sec:sim-quad-setup}

The A1 is driven by an existing whole-body locomotion stack that maps desired
body velocities to joint commands \cite{da2021learning}. 
As shown in
Fig.~\ref{fig:snapshot}(b), the reference velocity
$ v_t^r=[v_{x,t}^r,v_{y,t}^r]^\top$ is augmented by the DDKPC residual command
$\Delta v_t=[\Delta v_{x,t},\Delta v_{y,t}]^\top$, yielding
$ v_t^{\rm cmd}= v_t^r+\Delta v_t$ for the nominal locomotion controller.
From the viewpoint of DDKPC, the nominal controller, state estimator, robot dynamics, intermittent ground contacts, and payload constitute the unknown closed-loop plant; no rigid-body or contact model is used.
For forward-velocity control, the DDKPC input and output are
$u_t=\Delta v_{x,t}$, $ y_t=v_{x,t}-v_{x,t}^{r}$.
For planar-velocity control, they are extended to $u_t=\smat{\Delta v_{x,t}\\ \Delta v_{y,t}}$, $
    y_t=\smat{v_{x,t}-v_{x,t}^{r}\\v_{y,t}-v_{y,t}^{r}}$.

A $2$\,kg box payload is rigidly attached $0.08$\,m behind and
$0.14$\,m above the trunk center, with no lateral offset. The resulting
rearward and upward shift of the overall center of mass modifies the
closed-loop velocity response of the nominal locomotion controller. 
The DDKPC sampling period is $T_s=0.1$\,s. 
For each
input-output interface, the fixed data dictionary is constructed from two
$90$\,s records obtained under the same reference profile, one without the
payload and one with the payload. 
This provides data from both nominal and
payload-loaded operating conditions without requiring a model of either
condition. The past trajectory length is $\eta=3$, and the first $3$\,s of
each closed-loop test are excluded from the performance evaluation.
We adopt the same Gaussian kernel formulation as that used in Section \ref{sec:sim:one}.
The dictionary and $\overline{\mathbf K}$ are constructed offline and kept fixed during
the closed-loop tests. 

\subsubsection{Forward-velocity tracking}
\label{sec:sim-quad-vx}

The forward-velocity reference is $0.4$\,m/s for $0\leq t<12$\,s,
$0.6$\,m/s for $12\leq t<28$\,s, and $0.4$\,m/s thereafter. We first solve
the nonlinear program in Algorithm~\ref{alg:ddkmpc-ideal}  using \texttt{scipy.optimize.minimize} with \texttt{method='L-BFGS-B'} in Python.
The prediction horizon is $L=3$, the dictionary contains $60$ elements, and
$(\sigma_f,\ell^2,\gamma)=(1,1.6,10^{-6})$. 
Here, the kernel hyperparameters are selected empirically to balance multi-step prediction accuracy, closed-loop tracking performance, and numerical conditioning.
The residual command is constrained
to $[-0.2,0.2]$\,m/s, with $Q=10$, $r_u=0.2$, and
$r_{\Delta u}=0.05$. Similar to Section~\ref{sec:sim_nominal}, we also use GD
to reduce the online computation. 

\begin{figure}[t]
    \centering
    \includegraphics[width=\linewidth]{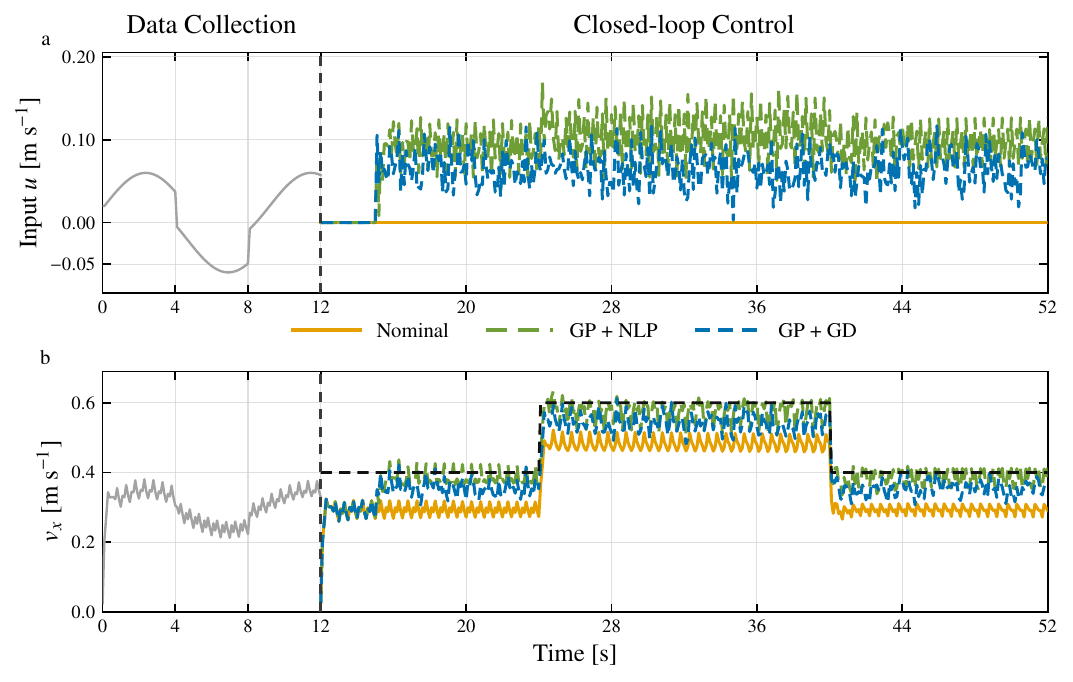}
    \caption{Forward-velocity tracking with a fixed $2$kg payload under the
    nominal whole-body controller, GD-DDKPC, and NLP-DDKPC. The reference velocity is represented by black dashed line.}
    \label{fig:quad-vxonly}
\end{figure}

As shown in Fig.~\ref{fig:quad-vxonly}, the payload causes a persistent velocity deficit under the nominal controller. The tracking RMSE, computed after excluding the first $3$s of closed-loop operation, is $0.1133$m/s for the nominal controller, $0.0351$m/s for NLP-DDKPC, and $0.0567$m/s for GD-DDKPC. Thus, NLP-DDKPC and GD-DDKPC reduce the RMSE by approximately $69.1\%$ and $49.9\%$, respectively. NLP-DDKPC achieves the highest tracking accuracy, whereas the GD implementation reduces the mean solution time from $82.66$ms to $23.43$ms.
We further compared Algorithms \ref{alg:ddkmpc-ideal} and \ref{alg:ddkmpc-online} under a payload-switching condition, where the reference velocity is fixed at $0.8$m/s and the payload changes from $0$ to $2$kg at $t=10$s and is removed at $t=30$s.
For this test, unlike the fixed-payload case above, the initial offline dictionary is constructed exclusively from payload-free data, so that the $2$kg condition is not represented in the training data. As shown in Fig.~\ref{fig:quad-offon}, the online DDKPC adapts more effectively to the payload variation. Over the complete test, the RMSEs of the nominal controller, offline DDKPC, and online DDKPC are $0.1128$, $0.1011$, and $0.0626$m/s, respectively. During the payload-loaded interval, the online scheme reduces the RMSE by $51.6\%$ and $38.1\%$ relative to the nominal controller and offline DDKPC, respectively.

 \begin{figure}[t]
    \centering
    \includegraphics[width=\linewidth]{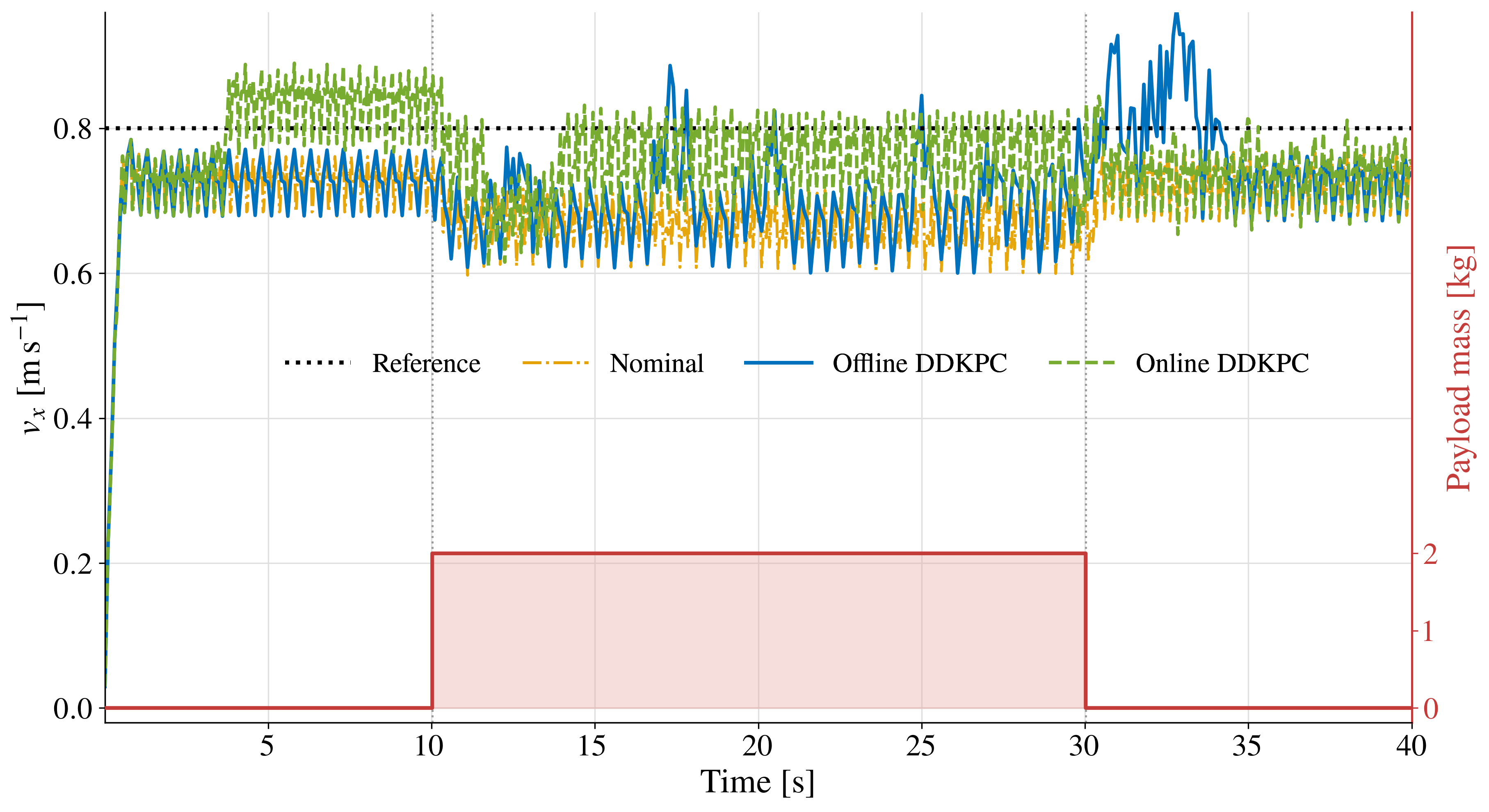}
    \caption{Forward-velocity tracking under a payload-switching condition: static offline DDKPC vs. online DDKPC.}
    \label{fig:quad-offon}
\end{figure}
\color{black}
\subsubsection{Planar-velocity tracking}
\label{sec:sim-quad-vxvy}

We next consider simultaneous forward- and lateral-velocity tracking. The
reference $(v_x^r,v_y^r)$ is $(0.45,0.15)$\,m/s for the first $12$\,s,
$(0.60,0.30)$\,m/s for the next $16$\,s, and $(0.45,0.15)$\,m/s for the final
$12$\,s. Since the computational cost of the nonlinear program increases
substantially with the input-output dimension, the planar test uses GD with
$L=6$ and a $300$-element dictionary selected by offline ALD. The kernel
parameters are $(\sigma_f,\ell^2,\gamma)=(0.9,1.2,0.1624)$, and the residual
commands satisfy $|\Delta v_x|,|\Delta v_y|\leq0.1$\,m/s.
Fig.~\ref{fig:quad-vxvy} shows that GD-DDKPC reduces the payload-induced
tracking errors in both channels. The raw RMSE of $v_x$ decreases from
$0.1179$ to $0.0624$\,m/s, while that of $v_y$ decreases from $0.1313$ to
$0.1173$\,m/s. The corresponding planar RMSE decreases from $0.1248$ to
$0.0939$\,m/s, representing a reduction of $24.7\%$.
These results show that Algorithm~\ref{alg:ddkmpc-ideal} can compensate for payload-induced velocity mismatch in a quadruped locomotion system. 

\begin{figure}[t]
    \centering
    \includegraphics[width=\linewidth]{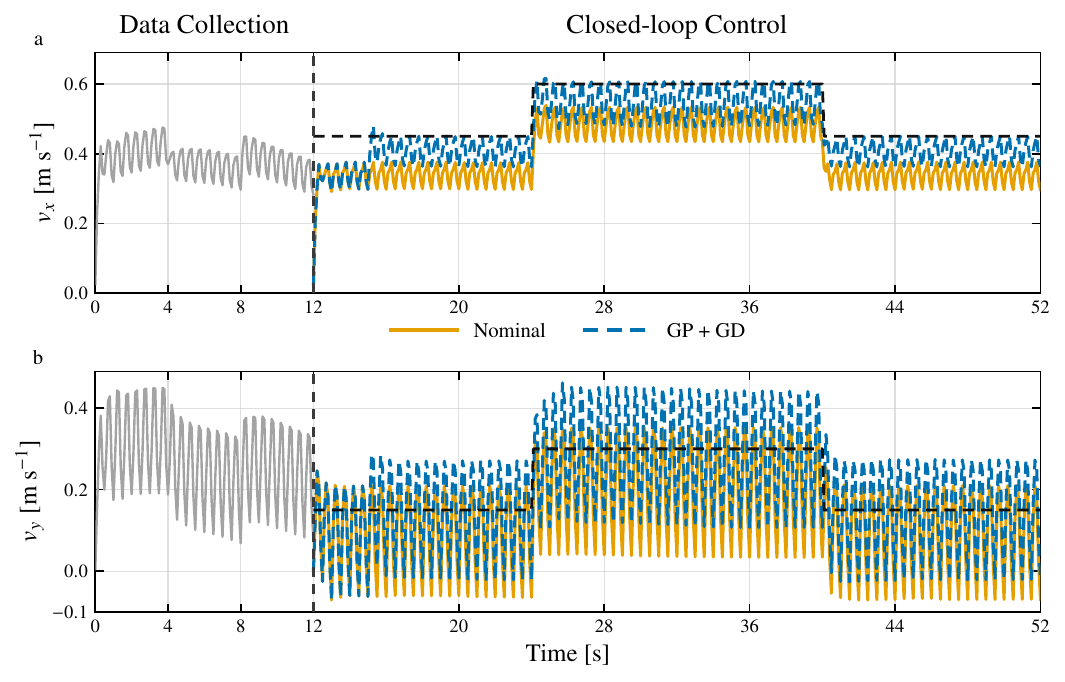}
    \caption{Planar body-velocity tracking with a fixed $2$\,kg payload under
    the nominal whole-body controller and GD-DDKPC.}
    \label{fig:quad-vxvy}
\end{figure}

\color{black}
\section{Conclusion}
\label{sec:conclusion}
This paper proposed a DDKPC of unknown nonlinear systems, relying exclusively on input-output trajectories. By integrating a robust predictive control architecture with an implicit multi-step predictor derived via the representer theorem, the proposed method effectively circumvents the need for explicit parametric modeling. Theoretical analysis established that, under nominal conditions, the scheme guarantees recursive feasibility and practical closed-loop stability, provided that the prediction horizon is adequately long and the initial kernel approximation error is bounded.
To bridge the gap between theoretical formulations and practical implementation, the framework was extended to address the computational demands of real-time deployment and the presence of measurement noise. 
Furthermore, the proposed DDKPC framework was extended to slowly time-varying nonlinear
systems by letting the data-dependent predictor in DDKPC evolve online.
The recursive feasibility and practical stability guarantees were shown to carry over under suitable regularity and prediction-error conditions.

Promising directions for future research include developing real-time adaptation mechanisms for the kernel hyperparameters and investigating theoretical approaches to relax the bounded perturbation assumption during online data updates.

\bibliographystyle{IEEEtran}
\bibliography{ddker}

\end{document}